\numberwithin{equation}{section}
\newtheorem{thm}[equation]{Theorem}
\newtheorem{lem}[equation]{Lemma}
\newtheorem{defn}[equation]{Definition}
\newcounter{mycount}
\newenvironment{letlist}{\begin{list}{(\alph{mycount})}%
   {\usecounter{mycount}\labelwidth=1cm\itemsep 0pt}}{\end{list}}
\newcommand{\tr}{\operatorname{tr}}
\newcommand{\uinvnorm}{|\kern-2pt|\kern-2pt|}
\def\1{\mbox{1\hskip-.25em l}}
\newcommand{\beq}{\begin{equation}}
\newcommand{\eeq}{\end{equation}}
\newcommand{\ZZ}{\mathbb{Z}}
\theoremstyle{plain}
\newtheorem{remark}[equation]{Remark}
\theoremstyle{definition}
\def\anb{a_{n,\b}}
\def\amb{a_{m,\b}}
\def\sH{\mathcal H}
\def\qq{\qquad}
\def\q{\quad}
\def\a{\alpha}
\def\b{\beta}
\def\de{\delta}
\def\lam{\lambda}
\def\th{\theta}
\def\s{\sigma}
\def\psidm{\psi_m}
\def\rc{random-cluster}
\def\eps{\epsilon}
\def\g{\gamma}
\def\Ga{\Gamma}
\def\Si{\Sigma}
\def\rc{random-cluster}
\def\ZZ{{\mathbb Z}}
\def\RR{{\mathbb R}}
\def\PP{{\mathbb P}}
\def\CC{{\mathbb C}}
\def\ZR{\ZZ\times\RR}
\def\Plb{\PP_{\lam,\de}}
\def\PLlb{\PP_{\La,\lam,\de}}
\def\PLlbq{\PP_{\La,\lam,\de,q}}
\def\PLmblb{\PP_{\Lamb,\lam,\de}}
\def\Om{\Omega}
\def\OmL{\Omega_\La}
\def\Ommb{\Om_{m,\b}}
\def\Omnb{\Om_{n,\b}}
\def\om{\omega}
\def\De{\Delta}
\def\Smb{\Sigma_{m,\b}}
\def\th{\theta}
\def\eps{\epsilon}
\def\La{\Lambda}
\def\Lamb{\La_{m,\b}}
\def\Lanb{\La_{n,\b}}
\def\oo{\infty}
\def\lest{\le_{\mathrm {st}}}
\def\thetac{\theta_{\mathrm{c}}}
\def\be{\begin{equation}}
\def\ee{\end{equation}}
\def\sm{\setminus}
\def\resp{respectively}
\def\pd{\partial}
\def\pdh{\pd^{\mathrm h}}
\def\lra{\leftrightarrow}
\def\nlra{\nleftrightarrow}
\def\es{\varnothing}
\def\phm{\phi_m}
\def\phmb{\phi_{m,\b}}
\def\ophmb{\ophi_{m,\b}}
\def\pn{\phi_n}
\def\ophm{\ophi_m}
\def\ophnb{\ophi_{n,\b}}
\def\ol#1{\overline{#1}}
\def\ophi{\ol\phi}
\def\blam{\bm{\lambda}}
\def\bde{\bm{\de}}
\def\Lam{\La_m}
\newcommand\an{a_n}
\begin{document}
\title[Bounded quantum entanglement entropy]{Bounded entanglement entropy\\ in the quantum Ising model}
\author[G. R. Grimmett, T. J. Osborne, P. F. Scudo]{Geoffrey R.\ Grimmett, Tobias J. Osborne, Petra F. Scudo}
\address{Centre for Mathematical Sciences,
University of Cambridge, Wilberforce Road, Cambridge CB3 0WB, UK}
\email{grg@statslab.cam.ac.uk}

\address{Institut f\"ur Theoretische Physik, Leibniz Universit\"at Hannover, 
Appelstr.\  2, 30167 Hannover, Germany}
\email{tobias.j.osborne@gmail.com}

\address{European Commission, Joint Research Centre,
Directorate B, Growth \&\ Innovation
Unit B6, Digital Economy
Via E.\ Fermi, 2749, 21027 Ispra (VA), Italy}
\email{pscudo@gmail.com}
   
\begin{abstract}
A rigorous proof is presented of the boundedness of the entanglement 
entropy of a block of
spins for the ground state of the one-dimensional quantum Ising
model with sufficiently strong transverse field.
This is proved by a refinement of the stochastic geometric arguments in the earlier
work by the same authors (J.\ Statist.\ Phys.\ 131 (2008) 305–339).
The proof utilises a transformation to a
model of classical probability called the continuum \rc\ model.
Our method of proof is fairly robust, and applies also to certain
disordered systems.  
\end{abstract}

\date{25 June 2019, revised 1 November 2019} 

\keywords{Quantum Ising model, entanglement, entropy, area law, 
random-cluster model}
\subjclass[2010]{82B20, 60K35}
\maketitle

\section{The quantum Ising model and entanglement}

The purpose of this note is to give
a rigorous proof of the area law for entanglement entropy
in the quantum Ising model in one dimension.
This is achieved by an elaboration of the
stochastic geometrical approach of \cite{GOS}.
We prove the boundedness of entanglement entropy of a block of spins
of size $L+1$ in the ground state of the model with sufficiently strong transverse field,
uniformly in $L$.
The current paper is presented as a development of the 
earlier work \cite{GOS} by the same authors,
to which the reader is referred for details of the background and basic theory.

The quantum Ising model in question is defined as follows.
We consider a block of $L+1$ spins in a line of length $2m+L+1$.
Let $L \ge 0$. For $m \ge 0$, let
$$
\Delta_m= \{-m,-m+1,\dots, m+L\}
$$
be a subset of the
one-dimensional lattice $\ZZ$, and attach to each vertex $x\in
\Delta_m$ a quantum spin-$\frac12$ with local Hilbert space
$\CC^2$. The Hilbert space $\mathcal{H}$ for the system is
$\mathcal{H} = \bigotimes_{x=-m}^{m+L} \CC^2$. A convenient basis
for each spin is provided by the two eigenstates
$|+\rangle=\left(\begin{matrix} 1\\0\end{matrix}\right)$,
$|-\rangle=\left(\begin{matrix}0\\1\end{matrix}\right)$, of the Pauli operator
$$
\sigma^{(3)}_x = \left(
\begin{array}{cc} 1 & 0
\\ 0 & -1\end{array} \right),
$$
at the site $x$, corresponding to the eigenvalues $\pm 1$.
The other two Pauli operators with respect
to this basis are represented by the matrices
\begin{equation}
\sigma^{(1)}_x= \left( \begin{array}{cc} 0 & 1 \\ 1 & 0\end{array}
\right), \qquad \sigma^{(2)}_x= \left ( \begin{array}{cc} 0& -i \\
i & 0\end{array}\right).
\end{equation}
A complete basis for $\mathcal{H}$ is given by the tensor products
(over $x$) of the eigenstates of $\sigma^{(3)}_x$. In
the following, $|\phi\rangle$ denotes a vector and $\langle \phi|$
its adjoint. As a notational convenience, we shall represent sub-intervals
of $\ZZ$ as real intervals, writing for example $\De_m=[-m, m+L]$.

The spins in $\Delta_m$ interact via the quantum Ising Hamiltonian
\begin{equation}\label{ham}
H_{m} = -\tfrac{1}{2} \sum_{\langle x,
y\rangle}\lambda\sigma^{(3)}_x
 \sigma^{(3)}_y -  \sum_{x} \delta\sigma^{(1)}_x,
\end{equation}
generating the operator $e^{-\b H_m}$ where $\b$ denotes inverse
temperature. Here,  $\lambda\geq 0$ and $\delta\geq 0$ are
the spin-coupling and external-field intensities, respectively, and
$\sum_{\langle x, y\rangle}$ denotes the sum over all (distinct)
unordered pairs of neighbouring spins. While we phrase our results for the
translation-invariant case, our approach can be extended to disordered systems with  couplings
and field intensities that vary across $\ZZ$, much as in \cite[Sect.\ 8]{GOS}. 
See Theorem \ref{thm5}.

The  Hamiltonian $H_m$ has a unique pure ground state $|\psidm
\rangle$ defined at zero temperature (as $\b\to\oo$) as the eigenvector
corresponding to the lowest eigenvalue of $H_m$. This ground state $|\psidm\rangle$ depends
only on the ratio $\theta=\lam/\de$.
We work here with a free boundary condition
on $\De_m$, but we note that the same methods are valid with a
periodic (or wired) boundary condition, in which $\De_m$ is embedded
on a circle. 

Write $\rho_m(\b)=e^{-\b H_m}/\tr(e^{-\b H_m})$, and
$$
\rho_m=\lim_{\b\to\oo}\rho_m(\b) =|\psidm \rangle\langle\psidm|
$$
for the density operator corresponding to the ground state of the
system.  The
ground-state entanglement of $|\psidm\rangle$ is quantified by
partitioning the spin chain $\Delta_m$ into two disjoint sets $[0,
L]$ and $\Delta_m\setminus [0, L]$ and by considering the entropy of
the \emph{reduced density operator} \be\label{reddo} \rho_m^L =
\tr_{\Delta_m\setminus [0, L]}(|\psidm\rangle\langle \psidm|). \ee
One may similarly define, for finite $\b$, the reduced operator
$\rho_m^L(\b)$. In both cases, the trace is performed over the
Hilbert space of spins belonging to
$\De_m\sm[0,L]$. Note that $\rho_m^L$ is a positive semi-definite
operator on the Hilbert space $\sH_L$ of dimension $d= 2^{L+1}$ of
spins indexed by the interval $[0, L]$. By the spectral theorem for
normal matrices \cite{bhatia}, this operator may be diagonalised and
has real, non-negative eigenvalues, which we denote in decreasing order by
$\lambda_j^{\downarrow}(\rho_m^L)$.

\begin{defn}\label{ent}
The \emph{entanglement (entropy)} of the interval
$[0,L]$ relative to its complement $\Delta_m \setminus [0, L]$ is given by
\begin{equation}\label{entdef}
S(\rho^L_m) = -\tr(\rho_m^L \log_2 \rho_m^L) =
-\sum_{j=1}^{2^{L+1}} \lambda_j^{\downarrow}(\rho_m^L)\log_2
\lambda_j^{\downarrow}(\rho_m^L),
\end{equation}
where $0 \log_2 0$ is interpreted as $0$.
\end{defn}

Here are our two main theorems.

\begin{thm}\label{mainest2}
Let $\lam,\de\in(0,\oo)$ and  $\th = \lam/\de$.
There
exists $C=C(\theta) \in(0,\oo)$, 
and a constant $\g=\g(\th)$ satisfying
$0<\g<\oo$ if $\th<2$, such that, for all $L\ge 1$,
\begin{equation}\label{eq:rcbound}
\|\rho_m^L-\rho_{n}^L\| \le \min\{2, C e^{-\gamma m}\},
\qquad 2\le m\le n.
\end{equation}
Furthermore, we may choose such $\g$ satisfying 
$\g(\th)\to\oo$ as $\th\downarrow 0$.
\end{thm}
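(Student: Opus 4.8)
The plan is to exploit the well-known transformation of the ground state of the one‑dimensional quantum Ising model to the continuum (space–time) \rc\ model on $\De_m\times\RR$ with parameters $\lam,\de$, as developed in \cite{GOS}. Under this correspondence the reduced density operator $\rho_m^L$ and its entanglement are governed by connection probabilities in the \rc\ model: loosely, matrix elements of $\rho_m^L$ are expressed through the probabilities that the ``time‑zero'' interfaces of $[0,L]$ are joined, or separated, by open paths living in the slab $\De_m\times\RR$. The key monotonicity is that enlarging $m$ to $n$ only adds further vertices/edges ``at spatial distance $\ge m$ from the block $[0,L]$'', so $\rho_m^L$ and $\rho_n^L$ differ only through \rc\ events that reach out to spatial distance $m$ from $[0,L]$. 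First I would make this precise: write $\|\rho_m^L-\rho_n^L\|$ (operator norm) in terms of a probabilistic coupling of the two \rc\ measures that agrees except on the event $E_m$ that there is an open path in the larger system connecting a neighbourhood of $[0,L]$ to the region at spatial distance $\ge m$; then $\|\rho_m^L-\rho_n^L\|\le c\,\PP(E_m)$ for a universal constant $c$, which already yields the trivial bound $2$ since any density‑matrix difference has norm at most $2$.

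Second, I would bound $\PP(E_m)$ by exponential decay in $m$. For $\th=\lam/\de$ small (equivalently $\de$ large relative to $\lam$) the continuum \rc\ model is strongly subcritical: the ``horizontal'' (spatial) connectivity has exponentially decaying two‑point function, with rate $\g(\th)$ that can be taken as large as we like by sending $\th\downarrow 0$, since in that regime open bridges between neighbouring spatial lines are rare (density $\sim\lam$) while deaths on each line are dense (rate $\sim\de$). A path reaching spatial distance $m$ must traverse $m$ such bridges, which by a standard BK‑type / subadditivity argument (exactly as in \cite{GOS}) costs $e^{-\g m}$. This gives $\PP(E_m)\le C e^{-\g m}$ with $0<\g<\oo$ and the claimed behaviour $\g(\th)\to\oo$ as $\th\downarrow 0$. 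Combining with the first step gives \eqref{eq:rcbound} for $\th$ small.

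Third, to reach the full range $\th<2$ (not merely $\th$ small) I would invoke the sharp‑threshold / exponential‑decay statement for the continuum \rc\ model below its critical point: the relevant comparison is with the quantum critical value, and the condition $\th<2$ is exactly the regime in which the $q=1$ (or more generally $q\ge 1$) continuum \rc\ model on $\ZR$ has a strictly positive mass $\g(\th)$ governing spatial decay of connectivities. Here I would cite the relevant results from the \rc\ literature invoked in \cite{GOS} (exponential decay of the radius of an open cluster throughout the subcritical phase), which upgrades the crude small‑$\th$ estimate to all $\th<2$, at the cost of losing control on the size of $\g$ but retaining $0<\g<\oo$.

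The main obstacle I expect is the first step rather than the second: translating the operator‑norm difference $\|\rho_m^L-\rho_n^L\|$ into a single \rc\ connection probability with the correct uniformity in $L$. One must control \emph{all} $2^{L+1}\times 2^{L+1}$ matrix elements simultaneously, and the naive bound picks up a factor growing with $L$; the resolution (following the refinement of \cite{GOS} advertised in the abstract) is to express the difference via a coupling in which the two ground‑state ``path measures'' are identical outside $E_m$, so that conditionally on $E_m^{\mathrm c}$ the reduced operators coincide \emph{as operators}, making the bound $L$‑free. Getting this coupling and the attendant operator‑norm estimate right — in particular handling the boundary interfaces of $[0,L]$ carefully — is the crux; the exponential‑decay input is then standard \rc\ technology.
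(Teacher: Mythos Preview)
Your identification of the main obstacle—uniformity in $L$—is correct, but the proposed resolution has a genuine gap. The representation from \cite{GOS} expresses matrix elements of $\rho_m^L$ as a \emph{ratio}
\[
\langle\psi|\rho_m^L|\psi\rangle \;=\; \frac{\phi_m\bigl(c(\sigma_L^+)\,c(\sigma_L^-)\bigr)}{a_m},
\qquad a_m=\phi_m(\sigma_L^+=\sigma_L^-),
\]
where $\phi_m$ is the continuum \rc\ measure on the \emph{slit} box and $\sigma_L^\pm$ are the spin-vectors on the two sides of the slit. Two things break your plan. First, the normalization $a_m$ depends on $m$, so even on the good event where the coupled configurations (and hence the spin values $\sigma_L^\pm$) agree near the slit, the contribution does not vanish: one is left with
$\bigl|1-a_m/a_n\bigr|\cdot \phi_m(c(\sigma_L^+)c(\sigma_L^-))/a_m$,
and the second factor must still be bounded uniformly in $L$ and in $c$ with $\|c\|=1$. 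Second, on the bad event the naive bound on $c(\sigma_L^+)c(\sigma_L^-)/a_m$ is exponentially large in $L$ (since $a_m$ may be of order $2^{-(L+1)}$), so the inequality $\|\rho_m^L-\rho_n^L\|\le c\,\PP(E_m)$ with a \emph{universal} $c$ does not follow from the coupling alone. Your statement that ``conditionally on $E_m^{\mathrm c}$ the reduced operators coincide as operators'' is not well-posed: $\rho_m^L$ is deterministic, and the coupling matches the numerators but not the $m$-dependent normalizations.

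The missing ingredient, and precisely the refinement over \cite{GOS} that removes the $L^\alpha$ factor, is a \emph{ratio weak-mixing} argument (Lemmas~\ref{lem1} and~\ref{lem2}) showing that the joint law of $(\sigma_L^+,\sigma_L^-)$ under $\phi_m$ approximately factorizes into the product of its marginals, up to edge corrections near the two ends of the slit. This yields the uniform bound $\phi_m(c(\sigma_L^+)c(\sigma_L^-))/a_m \le C'$ for all $\|c\|=1$ and all $L$ (see \eqref{e4}), and, after conditioning on the spin configuration outside an inner box, an analogous uniform bound on the bad-event term (see \eqref{new53}--\eqref{e7}). Only with these in hand does the coupling close with an $L$-free constant; this is the content of Theorem~\ref{mainest}, of which Theorem~\ref{mainest2} is then an immediate corollary. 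A separate minor point: the threshold $\theta=2$ is the critical value of the $q=2$ continuum \rc\ model (Theorem~\ref{contperc}, from \cite{BjG}), not of $q=1$; continuum percolation has critical value $\theta=1$, which is why \cite{GOS}, relying on the stochastic domination \eqref{stochcomp}, could only reach $\theta<1$.
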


Equation \eqref{eq:rcbound} is in terms of the operator norm:
 \beq\label{sup} 
 \| \rho^L_m- \rho^L_n
\|\equiv \sup_{\|\psi\|=1} \Big| \langle \psi |\rho^L_m-
\rho^L_n|\psi\rangle\Big|, 
\eeq 
where the supremum is taken over
all vectors $|\psi\rangle \in \sH_L$ with unit $L^2$-norm. 

\begin{remark}\label{rem:0}
The value $\theta=2$ is critical for the quantum Ising model in one dimension, 
and therefore the condition $\theta<2$ is sharp for $\gamma>0$ in \eqref{eq:rcbound}. 
See the discussion following \cite[Thm 7.1]{BjG}. 
\end{remark}

\begin{thm}\label{entest}
Consider the quantum Ising model \eqref{ham} on $n = 2m+L+1$ spins, with
parameters $\lam$, $\de$, and let $\g$ be as in Theorem \ref{mainest2}.
If
$\gamma > 2\ln 2 $, there exists $c_1=c_1(\theta)<\oo$ such that
\be\label{new50}
S(\rho_m^L) \le c_1,\qquad m,L \ge 0.
\ee
\end{thm}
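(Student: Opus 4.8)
The plan is to derive the bound on $S(\rho_m^L)$ from the operator-norm estimate of Theorem \ref{mainest2} by a standard two-step argument: first control the entropy of the limiting operator $\rho^L := \lim_{m\to\oo}\rho_m^L$ (which exists by \eqref{eq:rcbound}, since $(\rho_m^L)_m$ is Cauchy in operator norm), and then use continuity of entropy near $\rho^L$ together with the exponential convergence rate to bound $S(\rho_m^L)$ for every $m$. For the first step I would exploit the fact that $\rho^L$ is well-approximated in operator norm by the low-dimensional operators $\rho_m^L$: since $\|\rho_m^L - \rho^L\| \le Ce^{-\g m}$, the eigenvalues $\lam_j^{\downarrow}(\rho^L)$ decay at a rate governed by $\g$. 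Concretely, comparing with a reference scale, one expects $\lam_j^{\downarrow}(\rho^L)$ to be dominated by something like $\min\{1, C' e^{-\g' k}\}$ on dyadic blocks of eigenvalue-indices $j \asymp 2^k$, because the $2^{k}$-dimensional truncation $\rho_k^L$ already captures $\rho^L$ up to an error $e^{-\g k}$ in norm (hence up to $2^k e^{-\g k}$ in trace norm on that block). Feeding such a tail bound into $-\sum_j \lam_j \log_2 \lam_j$ and summing the resulting geometric-type series converges precisely when $\g > 2\ln 2$ (equivalently $e^{-\g}2 < 1$ after accounting for the $2^k$ multiplicity), which is where the hypothesis enters.

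More carefully, for the quantitative step I would split the index set $\{1,\dots,2^{L+1}\}$ into blocks $B_k = \{2^k,\dots,2^{k+1}-1\}$ for $k=0,1,\dots$ On block $B_k$, the truncation argument gives $\sum_{j\in B_k}\lam_j^{\downarrow}(\rho^L) \le \|\rho^L-\rho_k^L\|_1 \lesssim 2^k\|\rho^L-\rho_k^L\| \lesssim 2^k e^{-\g k} = (2e^{-\g})^k$; moreover each such $\lam_j$ on $B_k$ is at most the block average, which is $\lesssim e^{-\g k}$. Using the concavity/monotonicity of $x\mapsto -x\log_2 x$ on $[0,1]$, the contribution of $B_k$ to the entropy is at most $|B_k|\cdot\big(-\bar\lam_k\log_2\bar\lam_k\big)$ where $\bar\lam_k \lesssim e^{-\g k}$, giving a bound of order $2^k\cdot e^{-\g k}(\g k + O(1)) = (2e^{-\g})^k(\g k + O(1))$. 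Summing over $k\ge 0$ yields a finite constant $c_1(\th)$ exactly when $2e^{-\g}<1$, i.e. $\g > \ln 2$; the sharper threshold $\g>2\ln 2$ in the statement presumably comes from a less wasteful accounting in which one does not lose a full factor $2^k$ when passing from operator norm to trace norm, or from handling the $-\lam\log_2\lam$ weight more carefully — in any case the argument is robust to the precise constant.

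Finally, to pass from $\rho^L$ (the $m\to\oo$ limit) back to $\rho_m^L$ for finite $m$, I would invoke a Fannes–Audenaert-type continuity bound for the von Neumann entropy: $|S(\rho_m^L) - S(\rho^L)| \le \eta\big(\|\rho_m^L-\rho^L\|_1\big) + \|\rho_m^L-\rho^L\|_1 \log_2 d$ with $d = 2^{L+1}$ and $\eta$ a universal modulus of continuity. Here $\|\rho_m^L-\rho^L\|_1 \le 2^{L+1}\|\rho_m^L-\rho^L\| \le \min\{2, 2^{L+1}Ce^{-\g m}\}$ by \eqref{eq:rcbound}, so the dangerous term $\|\rho_m^L-\rho^L\|_1\log_2 d \le 2^{L+2}Ce^{-\g m}(L+1)$ — which is \emph{not} uniformly small in $L$ for fixed $m$. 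The main obstacle is therefore this regime $L \gg m$: the naive continuity route fails there. I expect one resolves it by a symmetric treatment, i.e. proving directly from the stochastic-geometric representation (the continuum \rc\ model) that $S(\rho_m^L)$ is also bounded by a constant depending only on $m$ — indeed for $L$ large relative to $m$ the entanglement across the cut should be controlled by the "area'' at the two boundary points near $\pm m$, which is $m$-limited — and then combining the two estimates (the $m$-driven bound for $L \gtrsim m$ and the $\rho^L$-bound for $L \lesssim m$). Alternatively, and more in the spirit of \cite{GOS}, one bounds $S(\rho_m^L)$ directly by a Schmidt-rank / connectivity estimate in the \rc\ representation that is manifestly uniform in both $m$ and $L$ once $\g>2\ln 2$, bypassing the limit $\rho^L$ altogether; this is the step I would expect to require the most care.
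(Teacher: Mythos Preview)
Your eigenvalue-decay idea is exactly the right one, but you apply it in the wrong place and miscount the rank, and that is why you hit the $L\gg m$ obstacle and get the wrong threshold.

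The paper does not pass through the limit $\rho^L$ at all, and in particular never invokes Fannes--Audenaert. Instead it bounds the eigenvalues of $\rho_m^L$ \emph{directly}, by comparing with $\rho_{m'}^L$ for $K\le m'\le m$. The key fact you are missing is the Schmidt-rank bound: since $\rho_{m'}^L$ is the reduced state of a pure state on $\De_{m'}=[-m',m'+L]$, its rank is at most the dimension of the complementary factor, namely $2^{2m'}$ (there are $2m'$ spins outside $[0,L]$). Hence, by Weyl's inequality and \eqref{eq:rcbound},
\[
\lambda_j^{\downarrow}(\rho_m^L)\le \|\rho_m^L-\rho_{m'}^L\|\le C e^{-\g m'}\qquad\text{for all } j>2^{2m'}.
\]
Optimising $m'\approx \tfrac12\log_2 j$ gives $\lambda_j^{\downarrow}(\rho_m^L)\le c\,j^{-\xi}$ with $\xi=\g/(2\ln 2)$, uniformly in $m$ and $L$. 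One then splits the entropy sum at a fixed cutoff $\nu=2^{2(K+2)}$ (with $K$ chosen so that $Ce^{-\g K}\le 1$): the first $\nu$ terms contribute at most $\log_2\nu$, and the tail $-\sum_{j>\nu}(c/j^\xi)\log_2(c/j^\xi)$ converges precisely when $\xi>1$, i.e.\ $\g>2\ln 2$. For $m\le K$ the Schmidt-rank bound gives $S(\rho_m^L)\le 2m\le 2K$ directly.

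Two concrete errors in your sketch: (i) you treat $\rho_k^L$ as a ``$2^k$-dimensional truncation'', but its rank is $\le 2^{2k}$, not $2^k$; this factor of $2$ in the exponent is exactly what turns your computed threshold $\g>\ln 2$ into the correct $\g>2\ln 2$. (ii) Having proved eigenvalue decay for the limit $\rho^L$, you try to transfer the entropy bound back to $\rho_m^L$ via continuity and are blocked by the $\log_2 d = L+1$ factor. This detour is unnecessary: the same comparison argument (with $m'\le m$ in place of $m'\le\infty$) gives the eigenvalue bound for $\rho_m^L$ itself, and the problem disappears.
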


Weaker versions of Theorems \ref{mainest2} and \ref{entest} were proved in \cite[Thms 2.2, 2.8]{GOS}, namely that \eqref{eq:rcbound}
holds subject to a power factor of the form $L^\alpha$, 
and \eqref{new50} holds with $c_1$ replaced by $C_1+C_2\log L$
(and subject to a slightly stronger assumption on $\g$). 
As noted in Remark \ref{rem:0}, Theorem \ref{mainest2} is a further strengthening of \cite[Thm 2.2]{GOS} in that
\eqref{eq:rcbound} holds for $\theta<2$, rather then just $\theta<1$. 
Stronger versions of these two theorems may be proved similarly,
with the interactions $\lam$ and field intensities $\delta$ varying with position while
satisfying a suitable condition.  A formal statement for the disordered case appears at
Theorem \ref{thm5}.

There is a considerable and growing  literature in the physics journals concerning entanglement entropy
in one and more dimensions. For example, paper \cite{ECP} is an extensive review of area laws.
The relationship between entanglement entropy and the spectral gap has been explored in 
\cite{AKLV,ALV}, and polynomial-time algorithms for simulating the ground state are studied in \cite{ALVV}.
Related works include studies of the XY spin chain \cite{AR},  oscillator systems \cite{BSW}, 
the XXZ spin chain \cite{BW}, and free fermions \cite{Pastur}. The connection between correlations and the area-law
is explored in \cite{BH}.

We make next some remarks about the proofs of the above two theorems.
The basic approach of these mathematically rigorous proofs
is via the stochastic geometric representation
of Aizenman, Klein, Nachtergaele, and Newman \cite{AKN, AN,BN}. Geometric techniques have proved of
enormous value in studying both classical systems (including Ising and Potts models, see for example \cite{G-RC}),
and quantum systems (see \cite{Bj15, Bj16, BjG, CI, GUW, BT93}). 

The proofs of Theorems \ref{mainest2}, \ref{entest} and the forthcoming Theorem \ref{thm5}
have much in common with
those of \cite[Thms 2.2, 2.8]{GOS} subject to certain improvements
in the probabilistic estimates. The general approach and many details are the same as in 
the earlier paper, 
and indeed there is some limited overlap of text. 
We  make frequent reference here to \cite{GOS}, 
and will highlight where the current proofs differ, while omitting
arguments that may be taken directly from \cite{GOS}.
In particular, the reader is referred to \cite[Sects.\ 4, 5]{GOS} for details of the
percolation representation of the ground state, and of the 
associated continuum random-cluster model. In Section \ref{be}, we review
the relationship between the  reduced density operator and the random-cluster model, and we
state the fundamental inequalities of Theorem \ref{mainest} and Lemma \ref{lem1}.
Once the last two results have been proved, Theorems \ref{mainest2} and \ref{entest} follow as in \cite{GOS}:
the first as in the proof of \cite[Thm 2.2]{GOS}, and the second as in that of \cite[Thm 2.8]{GOS}
(see the notes for the latter included in Section \ref{sec:proof2}).

We reflect in Section \ref{sec:disorder} on the extension of our methods and conclusions 
when the edge-couplings $\lam$ and field strengths $\de$ are permitted 
to vary, either deterministically or randomly, about the line. 
In this disordered case, the Hamiltonian \eqref{ham} is replaced by
\begin{equation}\label{ham2}
H_{m} = -\tfrac{1}{2} \sum_{\langle x,
y\rangle}\lambda_{x,y}\sigma^{(3)}_x
 \sigma^{(3)}_y -  \sum_{x} \delta_x\sigma^{(1)}_x,
\end{equation}
where the sum is over neighbouring pairs $\langle x,y\rangle$ of $\De_m$.
We write $\blam=(\lam_{x,x+1}: x \in \ZZ)$ and $\bde=(\de_x: x \in \ZZ)$.

\begin{thm}\label{thm5} 
Consider the quantum Ising model on $\ZZ$ with Hamiltonian \eqref{ham2}, such that, for some $\lam,\de>0$,
$\blam$ and $\bde$ satisfy
\be\label{eq:1005}
\lam_{x,y}/\de_x \le \lam/\de,\qquad y=x-1,x+1, \ x \in\ZZ.
\ee
\begin{letlist}
\item[\rm(a)] If $\lam/\de<2$, then \eqref{eq:rcbound} 
holds with $C$ and $\g$ as given there.
\item[\rm(b)] If, further, $\g> 2\ln 2$, then \eqref{new50} holds with $c_1$ as given there.
\end{letlist}
If $\blam$ and $\bde$ are random sequences satisfying \eqref{eq:1005} with probability one, then
parts {\rm(a)} and {\rm(b)} are valid a.s.
\end{thm}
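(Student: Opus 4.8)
The plan is to show that the whole derivation of Theorems \ref{mainest2} and \ref{entest} survives the replacement of \eqref{ham} by the disordered Hamiltonian \eqref{ham2}, once the pointwise hypothesis \eqref{eq:1005} is imposed. Recall that those two theorems are deduced (as in \cite[Thms 2.2, 2.8]{GOS}) from the percolation representation of Section \ref{be} together with the fundamental inequalities of Theorem \ref{mainest} and Lemma \ref{lem1}. For the Hamiltonian \eqref{ham2} the representation is unchanged in form: the associated continuum \rc{} model on $\De_m\times\RR$ now carries a death (cut) process of site-dependent intensity $\de_x$ on each line $\{x\}\times\RR$ and a bridge process of edge-dependent intensity $\lam_{x,x+1}$ across $\{x,x+1\}\times\RR$, while $\rho_m^L$ is recovered from it exactly as before. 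Moreover Theorem \ref{mainest} and Lemma \ref{lem1} are operator-norm and rank estimates whose proofs use no translation invariance, so they hold verbatim for the disordered model. Hence it suffices to re-establish, uniformly over all $(\blam,\bde)$ obeying \eqref{eq:1005}, the probabilistic input behind Theorem \ref{mainest2}: exponential decay at rate $\g(\th)$ of the relevant long-range connection probability, with $\g(\th)>0$ for $\th<2$ and $\g(\th)\to\oo$ as $\th\downarrow 0$.

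The comparison step is as follows. On each line $\{x\}\times\RR$ apply the time change $t\mapsto(\de_x/\de)\,t$. This map preserves every connectivity event of the \rc{} model: it alters neither the number nor the order of the cuts on a given line, nor the pair of lines joined by any bridge, so the partition of each line into death-free intervals and the incidences between those intervals and the bridges are unchanged. Under \eqref{eq:1005} the time-changed model has uniform cut intensity $\de$ on every line and bridge intensity at most $\lam$ across every edge (as measured in the rescaled coordinate of either endpoint). On this normalised model one re-runs the estimates of the homogeneous case: these are path-expansion and first-moment bounds in which the contribution of each traversed edge $\{x,x+1\}$ is controlled by the product of its bridge intensity and the mean length of the relevant death-free interval, hence by $\lam/\de=\th$, exactly as when $\blam\equiv\lam$ and $\bde\equiv\de$. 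One thereby obtains \eqref{eq:rcbound} with the same $C$ and the same $\g=\g(\th)$, which is part~(a); and part~(b) follows from part~(a) precisely as Theorem \ref{entest} follows from Theorem \ref{mainest2} (see Section \ref{sec:proof2}). If $\blam$ and $\bde$ are random with \eqref{eq:1005} holding almost surely, then on that event of full measure the foregoing deterministic bounds hold for the quenched model, giving (a) and (b) almost surely.

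The point requiring most care — and what I expect to be the main obstacle — is making the comparison step rigorous at the level of the refined estimates of the present paper rather than the cruder ones of \cite{GOS}. Two issues arise. First, the column-wise time change tilts the bridges: a bridge across $\{x,x+1\}\times\RR$ now connects points at unequal heights in the two rescaled lines, so one must verify that the geometric and renormalisation arguments underlying the decay estimate are insensitive to this tilt, i.e. that each step invokes $\lam$ and $\de$ only through the per-edge ratio $\lam_{x,x+1}/\de_x$, as in \cite[Sect.\ 8]{GOS}. Second, the threshold $\th<2$ in Theorem \ref{mainest2} is sharp (Remark \ref{rem:0}), so the lower bound on $\g$ must be obtained with no slack; one must check in particular that the exponential decay of connectivity in the continuum \rc{} model, which for $\th<2$ lies deeper than a first-moment argument, carries over to the inhomogeneous tilted setting using \eqref{eq:1005} edge by edge, and that this pointwise (not averaged) hypothesis keeps $\g(\th)$ positive up to $\th=2$ and divergent as $\th\downarrow 0$.
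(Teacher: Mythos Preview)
Your approach differs from the paper's, and the difference matters because the route you take runs into precisely the obstacle you flag, while the paper's route avoids it.

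First, a mischaracterisation: Theorem~\ref{mainest} and Lemma~\ref{lem1} are not ``operator-norm and rank estimates''; they are the probabilistic core of the argument, proved via the exponential-decay bound of Theorem~\ref{contperc} and the explicit constant $A$ of \eqref{new33}. They do \emph{not} hold verbatim for the disordered model---adapting them is exactly what has to be done. What carries over verbatim is the passage from Theorem~\ref{mainest} to Theorems~\ref{mainest2}/\ref{entest}, which is operator-theoretic.

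On the comparison step: you normalise $\bde$ by a column-wise time change and then confront a non-standard model with tilted bridges, for which you would need to re-prove the exponential decay of \cite{BjG} up to the sharp threshold $\th=2$. The paper instead keeps the geometry untouched and uses two monotonicity facts. (i) Connection probabilities of the continuum \rc\ model are increasing in $\blam$ and decreasing in $\bde$; this is the standard FKG/comparison inequality for $q\ge 1$. (ii) The only other parameter-dependent quantity in the proofs of Lemmas~\ref{lem1}--\ref{lem2} is the constant $A$ of \eqref{new33}, which in the inhomogeneous setting becomes a product
\[
A'_{x,k}=\prod_{i=1}^k \frac{\de_{x+i}}{2\bigl(\de_{x+i}+\lam_{x+i,x+i-1}+\lam_{x+i,x+i+1}\bigr)},
\]
each factor of which is, under \eqref{eq:1005}, at least $\bigl(2(1+2\th)\bigr)^{-1}$. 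Thus $A'_{x,k}$ obeys the same lower bound as $A^k$ in the homogeneous case. With (i) and (ii) in hand, one goes back through the proofs of Lemmas~\ref{thm2}, \ref{lem1}, \ref{lem2} and Theorem~\ref{mainest}: every upper bound on a connection probability and every lower bound involving $A$ is replaced by its homogeneous counterpart with parameters $(\lam,\de)$, and the constants $C,\g$ that emerge are those of the homogeneous model. No tilted bridges, no need to revisit \cite{BjG}.

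So your plan is not wrong in spirit, but the time-change device creates work that the paper's monotonicity argument simply bypasses. The cleanest fix to your write-up is to replace the second paragraph by the monotonicity/$A'$ argument above, and to drop the third paragraph entirely.
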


The situation is more complicated when $\blam$, $\bde$ are 
random but do not a.s.\ satisfy \eqref{eq:1005} with $\lam/\de<2$.

\begin{remark}\label{rem1}
The authors acknowledge Massimo Campanino's announcement 
in a lecture on 12 June 2019 of
his perturbative proof with Michele Gianfelice
of a version of Theorem \ref{mainest2} for sufficiently small $\theta$,
using cluster expansions. That announcement stimulated the authors of the current work.
\end{remark}

\section{Estimates via the continuum random-cluster model}\label{be}

We write $\RR$ for the reals and $\ZZ$ for the integers.
The \emph{continuum percolation model} on $\ZR$ is constructed as in \cite{st-perc,GOS}. 
For  $x\in\ZZ$, let $D_x$ be a Poisson
process of points in $\{x\}\times\RR$ with intensity $\de$; the
processes $\{D_x: x\in \ZZ\}$ are independent, 
and the points in the $D_x$ are termed
`deaths'.  The lines $\{x\}\times \RR$ are called \lq time lines'.

For $x\in\ZZ$, let $B_x$ be a Poisson process of points
in $\{x+\frac12\}\times\RR$ with intensity $\lam$;
the processes $\{B_x: x\in\ZZ\}$ are independent of each other and of the
$D_y$. For $x\in\ZZ$ and each $(x+\frac12,t)\in B_x$, we
draw a unit line-segment in $\RR^2$ with endpoints $(x,t)$
and $(x+1,t)$, and we refer to this as a `bridge' joining
its two endpoints.  For $(x,s), (y,t) \in \ZZ\times\RR$, we write
$(x,s)\lra (y,t)$ if there exists a path $\pi$ in $\RR^2$ with
endpoints $(x,s)$, $(y,t)$ such that: $\pi$ comprises sub-intervals
of $\ZR$ containing no deaths, together possibly with bridges.
For $\La,\De \subseteq \ZR$, we write $\La\lra \De$ if there exist
$a\in \La$ and $b\in \De$ such that $a\lra b$.
Let $\PLlb$ denote the associated probability
measure when restricted to the set $\La$, and write $\theta=\lam/\de$.

Let $\Plb$ be the corresponding measure on the whole space $\ZR$, 
and recall from \cite[Thm 1.12]{BG} that the value $\th=1$ is the critical point of the continuum percolation model.

The \emph{continuum random-cluster model} on $\ZR$ is defined as follows.
Let $a,b\in\ZZ$, $s,t\in\RR$ satisfy $a \le b$ and $s \le t$, 
and write $\La=[a,b]\times[s,t]$ for the box
$\{a,a+1,\dots,b\} \times [s,t]$. Its boundary $\pd\La$
is the set of all points $(x,y)\in\La$
such that: either $x\in\{a,b\}$, or $y\in \{s,t\}$, or both.

As sample space we take the
set $\Om_\La$ comprising all finite subsets (of $\La$) of deaths and bridges,
and we assume that no death is the endpoint of any bridge.
For $\om\in\Om_\La$, we write $B(\om)$ and $D(\om)$ for the sets
of bridges and deaths, respectively, of $\om$. 

The \emph{top/bottom periodic boundary condition} is imposed on $\La$:
for $x\in [a,b]$, we identify the two points $(x,s)$
and $(x,t)$.  The remaining boundary of $\La$,
denoted $\pdh\La$, is the set of 
points of the form $(x,u)\in \La$ with $x\in\{a,b\}$ and $u\in [s,t]$. 
 
For $\om\in\Om_\La$, let $k(\om)$
be the number of its clusters, counted according to the connectivity relation $\lra$
(and subject to the above boundary condition). Let $q\in(0,\oo)$, and define the
`continuum \rc' probability measure
$\PLlbq$ by
\be\label{rcPo}
d\PLlbq(\om) = \frac 1Z q^{k(\om)}d\PLlb(\om),
\qq \om\in\Om_\La,
\ee
where $Z$ is the appropriate partition function.
As at \cite[eqn (5.3)]{GOS},
\be\label{stochcomp}
\PLlbq \lest \PLlb, \qq q \ge 1,
\ee
in the sense of stochastic ordering.

We introduce next a variant in which the box
$\La$ possesses a `slit' at its centre. Let $L\in\{0,1,2,\dots\}$ and
$S_L=[0,L]\times\{0\}$. We think of $S_L$ as a collection
of $L+1$ vertices labelled in the obvious way as
$x=0,1,2,\dots,L$. For $m\ge 2$, $\b>0$, let $\Lamb$ be the box
$$
\Lamb=[-m,m+L]\times[-\tfrac12\b,\tfrac12\b]
$$
subject to a `slit' along $S_L$. That is,
$\Lamb$ is the usual box except that each vertex  $x\in S_L$ is
replaced by two distinct vertices $x^+$ and $x^-$. The vertex
$x^+$ (\resp, $x^-$) is attached to the half-line
$\{x\}\times(0,\oo)$ (\resp, the half-line $\{x\}\times(-\oo,0)$);
there is no direct connection between $x^+$ and $x^-$. Write
$S_L^\pm=\{x^\pm: x\in S_L\}$ for the upper and lower sections of
the slit $S_L$. Henceforth we take $q=2$.
Let $\ophmb$ be the continuum \rc\ measure
on the slit box $\Lamb$ with  
parameters $\lam$, $\de$, $q=2$ and free boundary condition on $\pd\Lamb$,
and let $\phmb$ be the corresponding probability measure with top/bottom periodic boundary condition.

We make a note concerning exponential decay which will be important later.
The critical point of the infinite-volume ($q=2$) continuum \rc\ model
on $\ZR$ with parameters $\lam$, $\de$
is given by
$\thetac=2$ where $\th=\lam/\de$
(see \cite[Thm 7.1]{BjG}). Furthermore, as in \cite[Thm 5.33(b)]{G-RC},
there is a unique infinite-volume weak limit, denoted $\phi_{\lam,\de}$,
when $\theta<2$.
In particular (as in the discussion of \cite{BjG}) 
there is exponential decay of connectivity when $\theta<2$. Let $\La_m=[-m,m]^2 \subseteq \ZR$,
with boundary $\pd\La_m$.

\begin{thm}[\mbox{\cite[Thms 6.2, 7.1]{BjG}}]\label{contperc}
Let $\lam, \de \in(0,\oo)$, and $I=\{0\}\times[-\frac12,\frac12]\subseteq \ZR$. 
There exist $C=C(\lam,\de)\in(0,\oo)$
and $\g=\g(\lam,\de)$ satisfying $\g>0$ when $\th=\lam/\de<2$, such that
\begin{equation}\label{eq:new10}
\phi_{\lam,\de}\bigl(I\lra \pd\La_m\bigr) \le Ce^{-\g m},\qquad m\ge 0.
\end{equation}
The function $\g(\lam,\de)$ may be chosen to satisfy $\g\to\oo$
as $\de\to \oo$ for fixed $\lam$.
\end{thm}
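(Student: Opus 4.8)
\emph{Discretization and the critical curve.} The plan is to pin down the critical value by comparison with the two-dimensional Ising model and then to derive exponential decay of connectivity in the subcritical phase; the key tool is a mesh-refinement approximation. Fix a small mesh $\eps>0$ and partition each time line $\{x\}\times\RR$ into cells $\{x\}\times[k\eps,(k+1)\eps)$, $k\in\ZZ$. Call the vertical edge between two consecutive cells \emph{closed} if a death of $D_x$ lies in the corresponding interval and \emph{open} otherwise; call the horizontal edge between the cells at height $k\eps$ on time lines $x,x+1$ \emph{open} if a point of $B_x$ lies in the corresponding interval. Weighting by $2^{k(\cdot)}$ yields the anisotropic random-cluster measure with $q=2$ on a box of $\ZZ^2$, with vertical and horizontal edge-parameters $p_{\mathrm v}(\eps)=e^{-\de\eps}$, $p_{\mathrm h}(\eps)=1-e^{-\lam\eps}$. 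As $\eps\downarrow0$ these measures converge weakly to the infinite-volume continuum model $\phi_{\lam,\de}$ (combining the $\eps\downarrow0$ and box limits), and, using monotonicity and the uniqueness of the infinite-volume measure for $\th<2$, connection probabilities transfer to the limit; cf.\ \cite{AKN,BG,G-RC}. Under the random-cluster/Ising correspondence $1-p=e^{-2K}$ the couplings are $K_{\mathrm h}(\eps)=\tfrac12\lam\eps$, $K_{\mathrm v}(\eps)=-\tfrac12\log(1-e^{-\de\eps})$, and a short computation gives
\[
\sinh\bigl(2K_{\mathrm h}(\eps)\bigr)\,\sinh\bigl(2K_{\mathrm v}(\eps)\bigr)\;\longrightarrow\;\frac{\lam}{2\de}=\frac{\th}{2}\qquad(\eps\downarrow0).
\]
By Onsager's criticality condition $\sinh\cdot\sinh=1$ for the anisotropic planar Ising model, $\th=2$ is the critical point and $\th<2$ corresponds to strict subcriticality, uniformly in small $\eps$.

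\emph{Exponential decay.} In the subcritical phase the two-point connection probability of the discrete random-cluster model decays exponentially: for $q\ge1$ this is sharpness of the phase transition, and for $q=2$ it is also a consequence of the exact solution. Passing to the limit gives $\phi_{\lam,\de}(I\lra\pd\La_m)\le Ce^{-\g m}$ with $\g=\g(\lam,\de)>0$ for $\th<2$. Alternatively one may argue entirely in the continuum: since $\th=2$ is critical, the susceptibility of the continuum Ising model is finite throughout $\th<2$, and the continuum analogue of the Simon--Lieb inequality \cite{AKN} upgrades finiteness of the susceptibility to exponential decay of the two-point function, hence of $\phi_{\lam,\de}(I\lra\pd\La_m)$.

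\emph{The main obstacle.} I expect the delicate point to be the control of the rate $\g$, in particular its uniformity as $\eps\downarrow0$ in the discrete route: the exponential rate in continuum units---equivalently $\eps$ times the correlation length of the discrete model---must stay bounded away from $0$, which is not automatic since $(p_{\mathrm h},p_{\mathrm v})\to(0,1)$ approaches a corner of parameter space. One resolves this with a quantitative lower bound on the mass gap: from the exact (Onsager) formula for the correlation length of the subcritical planar Ising model, explicit in $\sinh2K_{\mathrm h}$, $\sinh2K_{\mathrm v}$ and, since $\th/2$ is bounded away from $1$, giving an explicit $\th$-dependent bound on $\eps$ times the correlation length in both directions; or, in the continuum route, from a quantitative Simon--Lieb argument with an explicit upper bound on the susceptibility. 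Either way one must check that the rate does not collapse as $\th\uparrow2$, which is exactly where the sharpness of the critical point enters.

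\emph{Divergence of the rate.} For the last assertion, fix $\lam$ and let $\de\to\oo$, so $\th=\lam/\de\downarrow0$; here a direct first-moment estimate suffices. Any connection from $I$ to $\pd\La_m$ either crosses order $m$ unit horizontal steps---each requiring a bridge reached through a death-free window, an event of probability $\O(\th)$---or sustains death-free vertical segments of total length of order $m$ against death intensity $\de$. Summing over the at most $C^m$ combinatorial skeletons, the connection probability is at most $(C\th)^{m}+e^{-c\de m}$, whence $\g(\lam,\de)\ge\min\{c\de,\log(1/(C\th))\}\to\oo$ as $\de\to\oo$. This completes the proof.
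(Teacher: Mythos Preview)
The paper does not give its own proof of this theorem: it is quoted verbatim from \cite[Thms~6.2, 7.1]{BjG} and used as a black box. So there is no in-paper argument to compare against; the relevant comparison is with the approach of \cite{BjG}.

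Your strategy is genuinely different from that of \cite{BjG}. Bj\"ornberg--Grimmett work directly in the continuum space--time random-cluster model. Their Theorem~6.2 (exponential decay throughout the subcritical phase) is obtained by a continuum adaptation of the Aizenman--Barsky differential-inequality method for sharpness of the phase transition, and their Theorem~7.1 (the identification $\theta_{\mathrm c}=2$) follows from that sharpness statement together with planar self-duality of the $q=2$ continuum random-cluster model; Onsager's exact solution is not invoked. By contrast, you discretise to an anisotropic planar Ising/random-cluster model and read off both the critical curve and the mass from the exact solution, then pass to the $\eps\downarrow 0$ limit. The advantage of the \cite{BjG} route is precisely that it sidesteps the delicate limit you flag as the ``main obstacle'': one never has to show that $\eps$ times a discrete correlation length stays bounded away from $0$ as $(p_{\mathrm h},p_{\mathrm v})\to(0,1)$.

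On correctness: your computation $\sinh(2K_{\mathrm h})\sinh(2K_{\mathrm v})\to\theta/2$ is right, and the overall plan is sound in outline, but as written it is a sketch rather than a proof. The step where you appeal to ``the exact (Onsager) formula for the correlation length'' to get a uniform-in-$\eps$ lower bound on the mass is doing a lot of work and would need to be carried out carefully (including matching the horizontal and vertical decay rates to the isotropic box $\La_m$), and the weak-convergence statement ``connection probabilities transfer to the limit'' needs justification for events of the form $\{I\lra\pd\La_m\}$, which are neither open nor closed in the natural topology. The continuum Simon--Lieb route you mention is closer in spirit to \cite{BjG} and cleaner, but note it presupposes $\theta_{\mathrm c}=2$, which in your scheme still rests on the discretisation. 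For the final assertion $\g\to\infty$, a shorter argument than your first-moment count is available: by \eqref{stochcomp} the $q=2$ measure is stochastically dominated by continuum percolation $\PP_{\lam,\de}$, for which $\g(\lam,\de)\to\infty$ as $\de\to\infty$ with $\lam$ fixed is elementary.
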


Henceforth the function $\g$ denotes that of Theorem \ref{contperc}.
(The function $\g$ in Theorems \ref{mainest2}, \ref{entest} is derived from that
of Theorem \ref{contperc}.) By stochastic domination,
\eqref{eq:new10} holds with $\phi_{\lam,\de}$ replaced by $\PP_{\La,\lam,\de,2}$
for general boxes $\La$. 

It is explained in \cite{GOS} that a random-cluster configuration $\om$ 
gives rise, by a cluster-labelling process, to
an Ising configuration on $\La$, which serves (see \cite{AKN})  as a two-dimensional representation
of the quantum Ising model of \eqref{ham}.
We shall use
$\ophmb$ and $\phmb$ to denote  the respective couplings of the
continuum \rc\ measures and the corresponding (Ising) spin-configurations, and $\ophmb^\eta$, $\phmb^\eta$
for the measures with spin-configuration $\eta$ on $\pdh\Lamb$.

\begin{remark}\label{rem:3}
Theorem \ref{contperc} is an important component of the estimates that follow.
At the time of the writing of \cite{GOS}, the result was known only when $\theta<1$, 
and the corresponding exponential-decay theorem \cite[Thm 6.7]{GOS}  was proved by stochastic comparison with
continuum percolation (see \eqref{stochcomp}). More recent progress of \cite{BjG}
has allowed its extension to the $q=2$ continuum \rc\ model directly. 
In order to apply it in the current work, 
a minor extension of the ratio weak-mixing theorem \cite[Thm 7.1]{GOS} is needed,
namely that the mixing theorem holds with $\ol\phi$ taken to be the \rc\ measure on
$\La$ with \emph{free boundary conditions}. The proof is unchanged.
\end{remark}

\begin{remark}\label{rem:5}
In the proofs that follow, it would be convenient to have a stronger version of 
\eqref{eq:new10} with $\ophmb$ replaced by the finite-volume \rc\ measure on $\Lamb$ 
with wired boundary condition on $\pdh\Lamb$ and periodic top/bottom boundary condition.
It may be possible to derive such an inequality as in \cite{DRT}, but we do not pursue that option here.
\end{remark}

\begin{remark}\label{rem:6}
We shall work only in the subcritical phase $\theta=\lam/\de<2$. As remarked prior to Theorem 
\ref{contperc}, there exists a unique infinite-volume measure. Similarly, the limits 
\begin{equation}\label{eq:new66}
\ophm=\lim_{\beta\to\oo}\ophmb, \qq \phm=\lim_{\beta\to\oo}\phmb,
\end{equation}
exist and are identical measures on the strip $\Lam = [-m,m]\times(-\oo,\oo)$.
\end{remark}

Let $\Ommb$ be the sample space of the continuum \rc\ model
on $\Lamb$, and $\Smb$ the set of admissible allocations of spins to the clusters of
configurations, as in \cite[Sect.\ 5]{GOS}. For $\s\in\Smb$ and $x\in S_L$,
write $\s_x^\pm$ for the spin-state of $x^\pm$.
Let $\Si_L=\{-1,+1\}^{L+1}$ be the set of
spin-configurations of the vectors $\{x^+: x\in S_L\}$
and $\{x^-: x\in S_L\}$, and write $\s^+_L=
(\s_x^+: x\in S_L)$ and $\s^-_L=
(\s_x^-: x\in S_L)$.

Let
\be\label{eq:amdef2}
\amb=\ophmb(\s_L^+=\s_L^-).
\ee
Then,
\be\label{eq:35}
\amb\to a_m = \phm(\s_L^+=\s_L^-)\qq \text{as } \beta\to\oo,
\ee
where $\phm=\lim_{\beta\to\oo}\phmb$ as in Remark \ref{rem:6}.

Here is the main estimate of this section, of which Theorem \ref{mainest2}
is an immediate corollary with adapted values of the constants.
It differs from \cite[Thm 6.5]{GS} in the removal of a factor of order
$L^\alpha$, and the replacement of the condition $\theta<1$ by the weaker assumption $\theta<2$.

\begin{thm}\label{mainest}
Let $\lam, \de\in(0,\oo)$ and write $\th=\lam/\de$. 
If $\th < 2$, there exist $C,M \in(0,\oo)$, depending on
$\th$ only, such that
the following holds. For $L\ge 1$ and $M\le m\le n<\oo$,
\be\label{eq:36}
\sup_{\|c\|=1} \left| \frac{\phm(c(\s_L^+)c(\s_L^-))}{a_m}
- \frac{\pn(c(\s_L^+)c(\s_L^-))}{a_n} \right|
\le C  e^{-\frac13\g m},
\ee
where $\g$ is as in Theorem \ref{contperc}, and the supremum is over all functions $c:\Si_L\to\RR$ with 
$L^2$-norm satisfying $\|c\|=1$. 
\end{thm}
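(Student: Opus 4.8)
The plan is to identify the supremand in \eqref{eq:36} with the quadratic form of the difference of two reduced density operators, and to bound that difference by combining the spatial Markov property of the $q=2$ continuum \rc\ measure with the ratio weak-mixing inequality of \cite[Thm 7.1]{GOS} (in the free-boundary form of Remark \ref{rem:3}), arranging the probabilistic estimate so that no factor growing in $L$ is incurred. We work throughout with the limiting measures $\phm$, $\pn$ of Remark \ref{rem:6}. Begin with an a priori bound: for a measure $\mu$ on \rc\ configurations on a slit strip, let $\nu_\mu$ be the $2^{L+1}\times2^{L+1}$ matrix with entries $\nu_\mu(\s',\s'')=\mu(\s_L^+=\s',\,\s_L^-=\s'')\ge0$, so that $a_m=\tr\nu_{\phm}$ and $\phm(c(\s_L^+)c(\s_L^-))=\sum_{\s',\s''}c(\s')\,\nu_{\phm}(\s',\s'')\,c(\s'')$ for $c:\Si_L\to\RR$. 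Reflection through the slit $S_L$ interchanges $S_L^+$ and $S_L^-$ and preserves $\phm$, so $\nu_{\phm}$ is symmetric, and reflection positivity (as in \cite{G-RC}, and as reviewed in \cite{GOS}) makes it positive semi-definite; hence $\|\nu_{\phm}\|\le\tr\nu_{\phm}=a_m$ and therefore $|\phm(c(\s_L^+)c(\s_L^-))/a_m|\le\|\nu_{\phm}\|\,\|c\|^2/a_m=1$ for $\|c\|=1$, and likewise for $n$. This uniform bound by $1$ is what turns a multiplicative error estimate into the absolute bound \eqref{eq:36}.

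Since the slit strip of parameter $m$ lies inside that of parameter $n$, the spatial Markov property of the $q=2$ continuum \rc\ measure gives $\pn(\cdot)=\int\phm^\xi(\cdot)\,d\pi_n(\xi)$, where $\phm^\xi$ is the \rc\ measure on the narrower slit strip with the \rc\ boundary configuration $\xi$ induced on its vertical boundary $\Gamma_m=\{-m,\,m+L\}\times\RR$ (at horizontal distance at least $m$ from $S_L$), and $\pi_n$ is the $\pn$-law of $\xi$. After the cluster labels are integrated out, the events $A_{\s',\s''}=\{\s_L^+=\s',\,\s_L^-=\s''\}$ and $B=\{\s_L^+=\s_L^-\}$ become bounded \rc-measurable functions on the narrower strip, and the proof rests on the estimate, uniform in $L\ge1$,
\be\label{plan:rwm}
\sup_\xi\ \max_{\s',\s''}\ \left|\frac{\phm^\xi(A_{\s',\s''})}{\phm(A_{\s',\s''})}-1\right|\ +\ \sup_\xi\ \left|\frac{\phm^\xi(B)}{\phm(B)}-1\right|\ \le\ Ce^{-\frac13\g m},\qq C=C(\th).
\ee
Granting \eqref{plan:rwm}, integrating the identities $\phm^\xi(A_{\s',\s''})=(1+\eps_{\s',\s''}(\xi))\phm(A_{\s',\s''})$ and $\phm^\xi(B)=(1+\eps_B(\xi))\phm(B)$ against $\pi_n$ yields $\pn(A_{\s',\s''})=(1+\bar\eps_{\s',\s''})\phm(A_{\s',\s''})$ and $a_n=(1+\bar\eps_B)a_m$ with all $|\bar\eps|\le Ce^{-\frac13\g m}$; then, writing $f=c(\s_L^+)c(\s_L^-)=\sum_{\s',\s''}c(\s')c(\s'')\1_{A_{\s',\s''}}$ and using $\sum_{\s',\s''}|c(\s')|\,\nu_{\phm}(\s',\s'')\,|c(\s'')|\le\|\nu_{\phm}\|\,\|c\|^2\le a_m$, one gets $|\pn(f)-\phm(f)|\le Ce^{-\frac13\g m}a_m$, and dividing by $a_n=(1+\bar\eps_B)a_m$ while invoking $|\phm(f)/a_m|\le1$ gives \eqref{eq:36}, with an adjusted constant, for $m\ge M=M(\th)$ chosen so that $Ce^{-\frac13\g M}\le\tfrac12$. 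Theorem \ref{mainest2} then follows as in \cite{GOS}.

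The main obstacle is the $L$-uniformity of \eqref{plan:rwm}; this is precisely where the earlier estimate of \cite{GOS} lost a factor of order $L^\alpha$. The events $A_{\s',\s''}$, $B$ depend on the configuration only through the connectivity partition of the $2(L+1)$ slit vertices, and the boundary data $\xi$ on $\Gamma_m$ can alter this partition only by supplying a cluster that joins some slit vertex to $\Gamma_m$; accordingly the ratio weak-mixing bound of \cite[Thm 7.1]{GOS} for these events is, site by site along $S_L$, controlled by the $\phm$-probability that the cluster of that site reaches $\Gamma_m$. By Theorem \ref{contperc} (valid for $\th<2$, and transferred to $\phm$ by stochastic domination, as in the remark following that theorem), this probability for the site at horizontal position $x\in\{0,1,\dots,L\}$ is at most $Ce^{-\g\operatorname{dist}((x,0),\Gamma_m)}$ with $\operatorname{dist}((x,0),\Gamma_m)=\min\{x+m,\,m+L-x\}=m+\min\{x,L-x\}$. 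Summing over $x$, the factor $e^{-\g\min\{x,L-x\}}$ is a convergent geometric series, so the total is at most $C'e^{-\g m}$ \emph{with no dependence on $L$} --- the bulk of a long slit being automatically shielded from the far boundary --- whereas the cruder bound that is uniform over slit vertices produces the spurious factor $L$. Feeding this $L$-free connectivity bound into the ratio weak-mixing mechanism of \cite[Thm 7.1]{GOS}, necessarily in its free-boundary form since exponential decay under wired boundary is unavailable (cf.\ Remark \ref{rem:5}), gives \eqref{plan:rwm} with the exponent degraded from $\g$ to $\tfrac13\g$. The remaining ingredients --- the spatial Markov decomposition, reflection positivity, and the elementary manipulations above --- are routine.
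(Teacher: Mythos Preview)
Your approach is correct and takes a genuinely different route from the paper's. The paper proceeds by monotone coupling of $\ophmb$ and $\ophnb$, splitting according to whether a connection from an intermediate box $B$ to $\pdh\Lamb$ occurs (the events $D$, $\ol D$). To control both pieces it needs the uniform bound
\[
\frac{\ophmb\bigl(c(\s_L^+)c(\s_L^-)\bigr)}{\amb}\ \le\ A^{-4K}\,\frac{1+R_K}{1-R_K},
\]
their equation (2.15), which is obtained via Lemma \ref{lem1} --- an approximate-independence estimate for $\s_L^+$ and $\s_L^-$ that itself requires a separate ratio weak-mixing argument (Lemma \ref{thm2}). Your reflection-positivity observation, that $\nu_{\phm}$ is positive semi-definite so $\|\nu_{\phm}\|\le\tr\nu_{\phm}=a_m$ and hence $|\phm(c(\s_L^+)c(\s_L^-))/a_m|\le 1$, replaces this entire Lemma \ref{lem1} apparatus with a one-line bound, and simultaneously renders the coupling unnecessary: once the quadratic form is uniformly bounded, the spatial Markov decomposition together with what is essentially Lemma \ref{lem2} (your \eqref{plan:rwm}) finishes the argument directly. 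Your route is shorter and more conceptual; the paper's, in exchange, stays entirely within the percolation/random-cluster toolkit and does not invoke the spin-level reflection symmetry.

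Two small cautions. First, the positive semi-definiteness of $\nu_{\phm}$ is correct, but in this slit geometry the cleanest justification is not a bare citation: condition on the spin configuration $\tau$ on the non-slit portion of the horizontal axis; the upper and lower half-strips are then conditionally independent with reflected (hence identical) laws for $\s_L^+$ and $\s_L^-$, whence $\langle c,\nu_{\phm}c\rangle=\int\bigl(\sum_\s c(\s)Q_\tau(\s)\bigr)^2\,dP(\tau)\ge0$. Second, your heuristic for \eqref{plan:rwm} --- summing $e^{-\g(m+\min\{x,L-x\})}$ over $x$ --- captures the right idea, but note that $\Ga_m$ is an unbounded vertical set, so Theorem \ref{contperc} (stated for box boundaries) does not apply at face value; a separating contour of bounded aspect ratio is needed, exactly as in the paper's parallelogram $D_0$ in the proof of Lemma \ref{lem2}. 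Relatedly, the ratio weak-mixing and spatial Markov steps are most safely carried out at finite $\b$, where the boundary is compact and $\phm^\xi$ is unambiguous, with the limit $\b\to\oo$ taken at the end via Remark \ref{rem:6}.
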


In the proof of Theorem \ref{mainest},
we make use of the following two lemmas
(corresponding, \resp, to \cite[Lemmas 6.8, 6.9]{GOS}), which
are proved in Section \ref{rwm} using the method of ratio weak-mixing.

\begin{lem}\label{lem1}
Let $\lam,\de\in(0,\oo)$ satisfy $\th=\lam/\de<2$, and let $\g$ be as in Theorem \ref{contperc}. 
There exist constants $A(\lam,\de),C_1(\lam,\de)\in(0,\oo)$ such that the following holds.
Let 
\be\label{eq:new52}
R_K = C_1 e^{-\frac12\g K}
\ee
For all $L\ge 3$, $1\le K<\frac12 L$, $m\ge 1$, $\b \ge 1$, and all $\eps^+,\eps^-\in\Si_L$, we have that
$$
A^{2K}(1-R_K) \le \frac{\ophmb(\s_L^+=\eps^+,\, \s_L^-=\eps^-)}
{\ophmb(\s_L^+=\eps^+)\ophmb(\s_L^-=\eps^-)}
\le A^{-2K}(1+R_K),
$$
whenever  $K$ is such that $R_K \le \tfrac12$.
\end{lem}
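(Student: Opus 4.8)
The plan is to quantify the intuition that the two sections $S_L^+$ and $S_L^-$ of the slit, which sit on opposite sides of the cut at time $0$, become asymptotically independent because they are separated by the ``width'' of the slit. The key device is the ratio weak-mixing property of the $q=2$ continuum \rc\ measure $\ophmb$, in the free-boundary version noted in Remark \ref{rem:3}. First I would set up the geometric picture: the spins $\s_L^\pm$ are measurable with respect to the configuration of \rc\ clusters meeting the slit from above, \resp\ below. To compare $\ophmb(\s_L^+=\eps^+,\,\s_L^-=\eps^-)$ with the product of the marginals, I would condition on the \rc\ configuration in a thin neighbourhood of the slit and use the cluster-allocation description of the Ising spins: the joint law of $(\s_L^+,\s_L^-)$ is governed by which vertices $x^+$ and $y^-$ lie in a common cluster, and such a common cluster must ``go around'' at least one endpoint of the slit — i.e.\ it must contain a path from a neighbourhood of $S_L^+$ to a neighbourhood of $S_L^-$ that travels a distance of order $K$ away from the slit (here $K$ is the separation scale we introduce between the ``inner'' sites we track and the place where the two sides could communicate).

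Second, I would invoke Theorem \ref{contperc} (via the stochastic-domination remark following it, so that it applies to the finite box with free boundary condition) to bound the probability of such a long connection by $Ce^{-\g K}$; squaring — or more precisely, summing over the $O(K)$ or $O(1)$ possible ``bridging'' locations and the at most $2^{O(K)}$ local configurations, which is where the factor structure $A^{\pm 2K}$ arises — gives an error term of the claimed form $R_K=C_1 e^{-\frac12\g K}$ after absorbing polynomial and constant factors into $C_1$ and halving the exponent to be safe. Third, I would feed this into the ratio weak-mixing inequality of \cite[Thm 7.1]{GOS} (with the free-boundary extension of Remark \ref{rem:3}): that theorem states precisely that ratios of probabilities of events in well-separated regions are pinned to $1$ up to a multiplicative error decaying exponentially in the separation, and the constant $A$ (with $A<1$) is the ``per-site'' mixing constant, whence the $K$ sites being compared contribute $A^{\pm 2K}$ (the factor $2$ because both $\eps^+$ and $\eps^-$ are being controlled). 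Putting the two bounds together, and restricting to the regime $R_K\le\tfrac12$ so that the error term is genuinely a small perturbation of $1$, yields the two-sided bound
\[
A^{2K}(1-R_K)\ \le\ \frac{\ophmb(\s_L^+=\eps^+,\,\s_L^-=\eps^-)}{\ophmb(\s_L^+=\eps^+)\ophmb(\s_L^-=\eps^-)}\ \le\ A^{-2K}(1+R_K).
\]

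The main obstacle, I expect, is purely bookkeeping rather than conceptual: making precise the ``neighbourhood of the slit'' on which one conditions, and checking that the combinatorial cost of summing over local configurations in that neighbourhood is indeed of order $A^{-2K}$ (i.e.\ that the ratio weak-mixing constant can be taken uniform in $L$, $m$, $\b$ and in the boundary spins $\eps^\pm$). Since this lemma is stated to correspond to \cite[Lemma 6.8]{GOS}, the cleanest route is to follow that proof essentially verbatim, substituting the stronger input Theorem \ref{contperc} (valid for $\th<2$, not merely $\th<1$) for the old percolation-comparison estimate; the uniformity in $\b\ge 1$ and $m\ge 1$ comes for free because all the connectivity bounds are uniform in the box once one has exponential decay in the infinite-volume measure and stochastic domination. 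I would therefore state that the proof proceeds exactly as in \cite{GOS}, indicating only the single substitution, and defer the detailed constant-chasing.
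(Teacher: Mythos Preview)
Your high-level strategy and your final recommendation (follow the proof of \cite[Lemma 6.8]{GOS} with Theorem \ref{contperc} substituted for the old percolation estimate) are correct, but your account of where the factor $A^{2K}$ comes from is wrong, and this is not just bookkeeping: it is the main idea you are missing.

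In the paper's proof the factors $A^{\pm 2K}$ and $(1\pm R_K)$ arise from two \emph{entirely separate} mechanisms, applied in sequence.
First one \emph{peels off} the $2K$ spins near each end of the slit (sites $0,\dots,K-1$ and $L-K+1,\dots,L$, on both $S_L^+$ and $S_L^-$), using \cite[Lemma 7.25]{GOS} together with the elementary bound
\[
\PLmblb(x^+\nlra S)\ \ge\ \frac{\de}{2\lam+\de},
\]
i.e.\ the probability that the first Poisson event encountered on the time line above $x^+$ is a death.  Iterating gives
\[
A^{2K}\,\ophmb(\s_{L,K}^+=\eps_K^+,\,\s_{L,K}^-=\eps_K^-)\ \le\ \ophmb(\s_L^+=\eps^+,\,\s_L^-=\eps^-)\ \le\ \ophmb(\s_{L,K}^+=\eps_K^+,\,\s_{L,K}^-=\eps_K^-),
\]
with the explicit constant $A=\bigl(\de/(2(2\lam+\de))\bigr)^2$ of \eqref{new33}, and an analogous bound for each marginal.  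No mixing or decay of correlations is used here at all.
Only \emph{then} does ratio weak-mixing enter, via Lemma \ref{thm2}, applied to the truncated vectors $\s_{L,K}^\pm$; these are now at distance $\ge K$ from the ends of the slit, the separating set $D$ is the extension of the equator beyond the slit, and Theorem \ref{contperc} yields the factor $(1\pm R_K)$ with $R_K=C_1 e^{-\frac12\g K}$.

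The reason the peeling step is essential, and why your proposal as written has a gap, is that ratio weak-mixing cannot be applied directly to the full vectors $\s_L^\pm$: the endpoints $0^+$ and $0^-$ (likewise $L^+$ and $L^-$) sit at distance $0$ from one another across the edge of the slit, so there is no separation scale to exploit.  The peeling is precisely what manufactures a separation of order $K$, at the controlled multiplicative cost $A^{2K}$.  Your description of $A$ as a ``per-site mixing constant'' coming out of \cite[Thm 7.1]{GOS} is therefore incorrect; $A$ is a crude Poisson-process constant, and the mixing theorem contributes only the $(1\pm R_K)$ correction.
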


In the second lemma we allow a general spin boundary condition on $\pdh\Lamb$. 

\begin{lem}\label{lem2}
Let $\lam,\de\in(0,\oo)$ satisfy $\th=\lam/\de<2$, and let $\g$ be as in Theorem \ref{contperc}. 
There exists a constant $C_1\in(0,\oo)$ such that{\rm:}
for all $L\ge 3$, $m\ge 1$, $\b\ge 1$, all events $A\subseteq\Si_L\times\Si_L$,
and all admissible spin boundary conditions $\eta$
of $\pdh\Lamb$,
$$
\left| \frac{\ophmb^\eta((\s^+_L,\s_L^-)\in A)}
{\ophmb((\s^+_L,\s_L^-)\in A)}-1\right|
\le C_1e^{-\frac27\g m},
$$
whenever the right side of the inequality is less than $1$.
\end{lem}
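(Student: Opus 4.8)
The plan is to recognise the ratio in Lemma \ref{lem2} as a ratio-weak-mixing quantity for the free-boundary continuum \rc\ measure $\ophmb$, and then to invoke the (free-boundary version of the) ratio weak-mixing theorem \cite[Thm 7.1]{GOS}, exploiting that the slit $S_L^\pm$ and the horizontal boundary $\pdh\Lamb$ are separated by horizontal distance $m$. Imposing the spin-boundary-condition $\eta$ on $\pdh\Lamb$ in the coupled \rc/Ising measure amounts, in the sense of the construction of \cite[Sect.\ 5]{GOS}, to conditioning $\ophmb$ on the event $F^\eta=\{\s=\eta\text{ on }\pdh\Lamb\}$. Hence, writing $A'=\{(\s^+_L,\s^-_L)\in A\}$ for $A\subseteq\Si_L\times\Si_L$,
$$
\left|\frac{\ophmb^\eta(A')}{\ophmb(A')}-1\right|
=\left|\frac{\ophmb(A'\cap F^\eta)}{\ophmb(A')\,\ophmb(F^\eta)}-1\right|.
$$
Now $A'$ is measurable with respect to the spins at the $2(L+1)$ slit-points $\{(x,0^\pm):x\in S_L\}\subseteq[0,L]\times\{0\}$, while $F^\eta$ is measurable with respect to the spins along $\pdh\Lamb=\{-m,m+L\}\times[-\tfrac12\b,\tfrac12\b]$, and these two sets lie at Euclidean distance $m$, uniformly in $L$ and $\b$.

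I would then apply the ratio weak-mixing theorem \cite[Thm 7.1]{GOS} in the form indicated in Remark \ref{rem:3}, where the reference measure carries free boundary conditions --- which is exactly $\ophmb$, free on $\pd\Lamb$. This bounds the right-hand side of the displayed identity by $C e^{-c\,d}$ with $d=\mathrm{dist}(A',F^\eta)=m$ and $c$ a fixed positive multiple of the connectivity-decay rate $\g$ of Theorem \ref{contperc} (available on all of $\th<2$). The subtlety is that $F^\eta$ is supported on a set of unbounded height, so it cannot be treated as a purely local event: one controls the influence of $\pdh\Lamb$ by summing the connectivity estimate \eqref{eq:new10} over the height of $\pdh\Lamb$, using that a connection from the slit reaching height $k$ has length at least of order $\sqrt{m^2+k^2}$, which costs only a factor growing polynomially in $m$ and hence a small reduction of the rate. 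Carrying out this bookkeeping, together with the standard loss incurred in passing from exponential decay of connectivities to ratio weak mixing, yields the bound $C_1e^{-\frac27\g m}$, uniformly in $L\ge3$, $m\ge1$, $\b\ge1$, in the event $A$, and in the admissible $\eta$ --- the uniformity in $\eta$ being automatic since $F^\eta$ simply ranges over events measurable with respect to the spins on $\pdh\Lamb$. As in the statement, the conclusion is informative only where the right-hand side is less than $1$.

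The main obstacle is the adaptation of \cite[Thm 7.1]{GOS}: one needs the free-boundary version of ratio weak mixing flagged in Remark \ref{rem:3} (whose proof is asserted there to be unchanged); one must make rigorous, in the continuum setting, the identification of $\ophmb^\eta$ with a conditioning of $\ophmb$; and, most delicately, one must control the influence of the unbounded-height boundary $\pdh\Lamb$ so that the final exponent $\tfrac27\g$ is genuinely uniform in $\b$. The remaining ingredients --- the conditioning identity and the distance computation --- are routine, and the overall argument follows that of \cite[Lemma 6.9]{GOS}; the new input is that Theorem \ref{contperc} now supplies exponential decay of connectivity on the whole subcritical range $\th<2$ rather than merely for $\th<1$, which is precisely what permits the hypothesis $\th<2$ here.
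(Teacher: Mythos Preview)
Your proposal correctly identifies the framework---ratio weak mixing applied to the free-boundary measure $\ophmb$, with $\ophmb^\eta$ interpreted as $\ophmb$ conditioned on the spin event $F^\eta$ on $\pdh\Lamb$---and this is indeed how the paper proceeds. However, there is a genuine gap in your execution: \cite[Thm 7.1]{GOS} is not a statement that bounds $\bigl|\ophmb(A'\cap F^\eta)/[\ophmb(A')\ophmb(F^\eta)]-1\bigr|$ directly in terms of the distance between the supports of $A'$ and $F^\eta$. It requires a \emph{linear separator} $D$ between $\De=S_L^+\cup S_L^-$ and $\Ga=\pdh\Lamb$, and the bound is in terms of $t_1=\ophmb(\De\lra D)$ and $t_2=\sqrt{\ophmb(D\lra\Ga)}$. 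You never specify $D$, and your discussion of ``summing the connectivity estimate over the height of $\pdh\Lamb$'' using distances $\sqrt{m^2+k^2}$ does not fit that structure; that sum would estimate $\ophmb(\De\lra\Ga)$, which is not the input to the theorem.

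The paper's proof supplies exactly this missing ingredient: it constructs an explicit separator $D_0$, a parallelogram-shaped circuit about the slit with sides of slope $2$ and horizontal half-width $k=\tfrac37 m$. The geometry is tuned so that the $L^\infty$ distance from a slit point $(i,0)$ to $D_0$ is $\tfrac23(k+i)$, giving $t_1\le C e^{-\frac27\g m}$, while the horizontal distance from $D_0$ to $\pdh\Lamb$ is $m-k=\tfrac47 m$, giving $t_2^2\le C e^{-\frac47\g m}$ and hence $t_2\le C' e^{-\frac27\g m}$. The choice $k=\tfrac37 m$ is precisely what balances $t_1$ against $t_2$ (the square root in $t_2$ forces the separator to sit closer to $\De$ than to $\Ga$), and is the source of the exponent $\tfrac27\g$ that you quote but do not derive. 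Without this construction your argument does not go through.
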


\begin{proof}[Proof of Theorem \ref{mainest}]
Let $\theta<2$, and let $\g$ be as in Theorem \ref{contperc}.
It suffices to prove \eqref{eq:36} with $\phi_m$ (\resp, $\phi_n$) replaced by $\ophmb$
(\resp, $\ophnb$), and $a_m$ (\resp, $\an$) replaced by $\amb$ (\resp, $\anb$).
Having done so, we let $\b\to\oo$ to obtain \eqref{eq:36} by Remark \ref{rem:6}. 

Let $A$, $C_1$, $R_K$ be as in Lemma \ref{lem1}, and
let $L\ge 3$ and $1\le K<\frac12 L$ be such that
\be\label{new32}
R_K  \le \tfrac14.
\ee
Remaining small values of $L$ are covered in \eqref{eq:36} by adjusting $C$.

Since $\ophmb\lest\ophnb$, 
we may couple $\ophmb$ and $\ophnb$ via a probability measure
$\nu$ on pairs $(\om_1,\om_2)$ of configurations on $\Lanb$ in such a way that
$\nu(\om_1\le\om_2) = 1$. It is standard
(as in \cite{G-RC,New93}) that we may find $\nu$ such that
$\om_1$ and $\om_2$ are identical configurations within the region of $\Lamb$ that
is not connected to $\pdh\Lamb$ in the upper configuration 
$\om_2$.
Let $D$ be the set of all pairs 
$(\om_1,\om_2)\in\Omnb\times\Omnb$ such that: $\om_2$ contains
no path joining $\pd B$ to $\pdh\Lamb$, where
\begin{equation}\label{ruppbnd}
B = [-r,r+L]\times[-r,r], \qq r=\lfloor \tfrac12 m\rfloor.
\end{equation}
The relevant regions are illustrated in Figure \ref{fig:0}.

\begin{figure}[ht] 
\begin{center}
\includegraphics[width=0.7\textwidth]{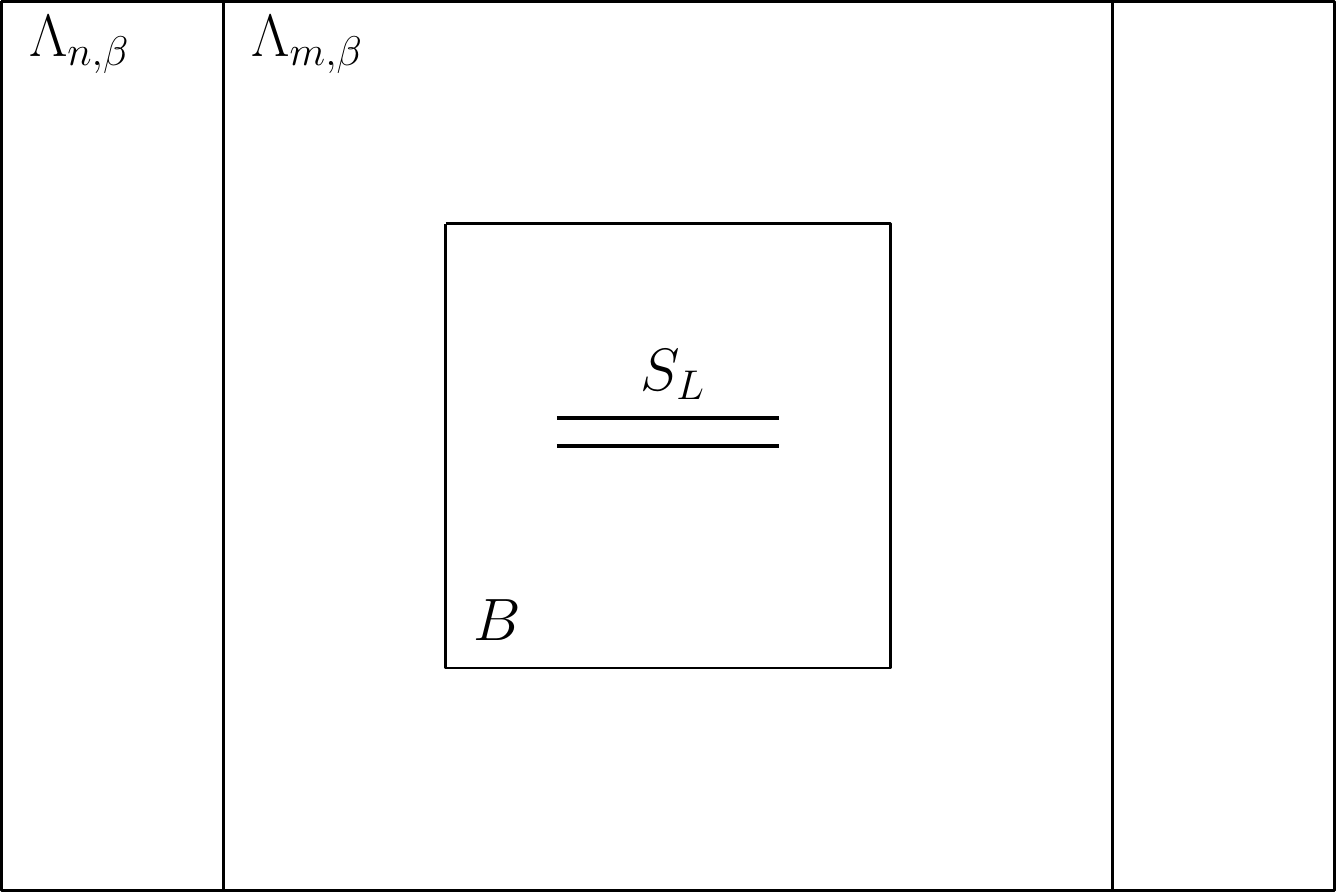}
\caption{The boxes $\Lanb$, $\Lamb$, and $B$.}\label{fig:0}
\end{center}
\end{figure}

Having constructed the measure $\nu$ accordingly, we may now allocate spins to the clusters of
$\om_1$ and $\om_2$ in the manner described in \cite[Sect.\ 5]{GOS}. This may be done in such a way that,
on the event $D$, the spin-configurations associated with $\om_1$ and $\om_2$
within $B$ are identical. We write $\s_1$ (\resp, $\s_2$) for
the spin-configuration on the clusters of $\om_1$ (\resp, $\om_2)$, and
$\s_{i,L}^\pm$ for the spins of $\s_i$ on the slit $S_L$.

By the remark following \cite[eqn (6.4)]{GOS}, it suffices to consider non-negative functions
$c:\Si_L\to\RR$, and thus we let  $c:\Si_L\to[0,\oo)$ with $\|c\|=1$. Let
\be
S_c = \frac{c(\s_{1,L}^+)c(\s_{1,L}^-)}{\amb}
- \frac{c(\s_{2,L}^+)c(\s_{2,L}^-)}{\anb},
\ee
so that
\be\label{e1}
\frac{\ophmb(c(\s_L^+)c(\s_L^-))}{\amb}
- \frac{\ophnb(c(\s_L^+)c(\s_L^-))}{\anb}
= \nu (S_c 1_D) + \nu(S_c 1_{\ol D}),
\ee
where $\ol D$ is the complement of $D$, and $1_E$ is the indicator function of $E$.

Consider first the term $\nu(S_c 1_D)$ in \eqref{e1}. On the event
$D$, we have that $\s_{1,L}^\pm = \s_{2,L}^\pm$, so that
\be\label{e2}
|\nu(S_c 1_D)| \le \left|1-\frac {\amb}{\anb}\right|
\frac{\ophmb( c(\s_{L}^+)c(\s_{L}^-))}{\amb}.
\ee

By Lemma \ref{lem1} and \cite[Lemma 6.10]{GOS},
\begin{align}
\ophmb(c(\s_{L}^+)c(\s_{L}^-))
&= \sum_{\eps^\pm \in \Si_L} c(\eps^+)c(\eps^-) \ophmb( \s_{L}^+=\eps^+,\, \s_{L}^-=\eps^- )
\nonumber\\
&\le A^{-2K}(1+R_K)  \ophmb(c(\s_L^+))\ophmb(c(\s_L^-))\nonumber\\
&=  A^{-2K}(1+R_K)  \left(\sum_{\eps\in \Si_L}c(\eps)\ophmb(\s_L^+=\eps) \right)^2\nonumber\\
&\le A^{-2K}(1+R_K) \sum_{\eps\in\Si_L} \ophmb(\s_L^+=\eps)^2,\label{eq:300}
\end{align}
where we have used reflection-symmetry in the horizontal axis at the 
intermediate step.
By Lemma \ref{lem1} and reflection-symmetry again, 
\begin{align*}
\amb &= \sum_{\eps\in\Si_L} \ophmb(\s_L^+=\s_L^-=\eps)\\
&\ge A^{2K}(1-R_K) \sum_{\eps\in\Si_L} \ophmb(\s_L^+=\eps)^2.
\end{align*}
Therefore, 
\be\label{e4}
\frac{\ophmb( c(\s_{L}^+)c(\s_{L}^-))}{\amb} \le
A^{-4K}\frac{1+R_K}{1-R_K}.
\ee

We set $A=\{\s_L^+=\s_L^-\}$ in Lemma \ref{lem2} to find that, for sufficiently
large $m\ge M_1(\lam,\de)$,
$$
\left| \frac{\ophmb^\eta(\s^+_L=\s_L^-)}
{\ophmb(\s^+_L=\s_L^-)}-1\right|
\le C e^{-\frac27 \g m}<\frac 12.
$$
Each of the two probabilities on the left side may be interpreted
as probabilities in the continuum Potts model of \cite[eqn (5.4)]{GOS} on $\Lam$.
By averaging over $\eta$, sampled according to $\ophnb$ when viewed as a Potts measure, we deduce 
by the spatial Markov property that
$$
\left| \frac{\ophnb(\s^+_L=\s_L^-)}
{\ophmb(\s^+_L=\s_L^-)}-1\right|
\le Ce^{-\frac27\g m} < \frac12,
$$
which is to say that
\be\label{e3}
\left|\frac{\anb}{\amb}-1\right| \le C e^{-\frac27\g m} < \frac12.
\ee

We make a note for later use.
In the same way as above, a version of inequality \eqref{e4} holds 
with $\ophmb$ replaced by the continuum \rc\ measure $\ophi_B$
on the box $B$ with free boundary conditions, namely,
\be\label{new53}
\frac{\ophi_B( c(\s_{L}^+)c(\s_{L}^-))}{a_B} \le
A^{-4K}\frac{1+R_K}{1-R_K},
\ee
where $a_B = \ophi_B(\s_L^+ = \s_L^-)$.
By \eqref{ruppbnd} and \eqref{e3}, we may take $C$ and $M_1$ above such that
\be\label{e10}
\left|\frac{\anb}{a_B}-1\right| \le C e^{-\frac17\g m} < \frac12, \qquad m\ge M_1(\lam,\de).
\ee

Inequalities \eqref{e4} and \eqref{e3} may be combined as in
\eqref{e2} to obtain
\be\label{e5}
|\nu(S_c 1_D)| \le  C_1 A^{-4K}\frac{1+R_K}{1-R_K} e^{-\frac27\g m}
\ee
for an appropriate constant $C_1= C_1(\lam,\de)$ and all $m\ge M_1$.

We turn to the term $\nu(S_c 1_{\ol D})$ in \eqref{e1}.
Evidently,
\be\label{e6}
|\nu(S_c 1_{\ol D})| \le A_m + B_n,
\ee
where
$$
A_m = \frac {\nu(c(\s_{1,L}^+)c(\s_{1,L}^-) 1_{\ol D})}{\amb},\q
B_n = \frac {\nu(c(\s_{2,L}^+)c(\s_{2,L}^-) 1_{\ol D})}{\anb}.
$$
There exist constants $C_2$, $M_2$ depending on $\lam$, $\de$,
such that, for $m>r \ge M_2$,
\begin{align}
B_n &=
\frac{\nu(\ol D)}{\anb} \nu(c(\s_{2,L}^+)c(\s_{2,L}^-)\mid \ol D )\nonumber\\
&= \frac{\nu(\ol D) }{\anb}
\ophnb\bigl( \ophi_B^\eta(c(\s_{2,L}^+)c(\s_{2,L}^-))\mid \ol D\bigr)
\nonumber\\ 
&\le \frac{\nu(\ol D)}{a_B} C_2 \ophi_B(c(\s_L^+)c(\s_L^-))
\label{eq:301}
\end{align} 
by Lemma \ref{lem2} with $\ophmb$ replaced by $\ophi_B$, and \eqref{e10}.
At the middle step, we have used conditional expectation
given the spin configuration $\eta$
on $\Lamb\sm B$.
By \eqref{new53}, 
\be\label{e7}
B_n\le \nu(\ol D) A^{-4K}\frac{1+R_K}{1-R_K}.
\ee

A similar upper bound is valid for $A_m$, on noting that the conditioning
on $\ol D$ imparts certain information about the configuration
$\om_1$ outside $B$ but nothing
further about $\om_1$ within $B$.
Combining this with \eqref{e6}--\eqref{e7}, we find that,
for $r \ge M_3(\lam,\de)$ and some $C_3=C_3(\lam,\de)$,
\be\label{e8}
|\nu(S_c 1_{\ol D})| \le \nu(\ol D) C_3 A^{-4K}\frac{1+R_K}{1-R_K}.
\ee
By \eqref{stochcomp}, \eqref{ruppbnd}, and Theorem \ref{contperc},
\be\label{e23}
\nu(\ol D) \le C_4 m e^{-\frac12\gamma m} \le C_5  e^{-\frac13\gamma m} ,
\qquad m \ge M_4,
\ee 
for some $C_4$, $C_5$, $M_4\ge 2M_3$.
We combine \eqref{e5}, \eqref{e8}, \eqref{e23} as in \eqref{e1}.
Letting $\b\to\oo$ and recalling \eqref{new32},
we obtain \eqref{eq:36} from \eqref{eq:35}, for $m\ge  M:=
\max\{M_1,M_2,M_4\}$.

Finally, we remark that $C$ and $M$ depend on both $\lam$ and $\de$.
The left side of \eqref{eq:36} is invariant under re-scalings of the time-axes,
that is, under the transformations $(\lam,\de) \mapsto (\lam\eta, \de\eta)$ 
for $\eta\in(0,\oo)$. We may therefore work with the new values
$\lam'=\th$, $\de'=1$, with appropriate constants $\a(\th,1)$,
$C(\th,1)$, $M(\th,1)$. 
\end{proof}

\section{Proofs of Lemmas \ref{lem1} and \ref{lem2}}\label{rwm}

Let $\La$ be a box in $\ZR$ (we shall later
consider a box $\La$ with a slit $S_L$, for which the same
definitions and results are valid).
A {\em path} $\pi$ of $\La$ is an alternating sequence of disjoint intervals
(contained in $\La$) and unit line-segments 
of the form $[z_0,z_1]$, $b_{12}$, $[z_2,z_3]$, $b_{34}$,
$\dots$, $b_{2k-1,2k}$, $[z_{2k},z_{2k+1}]$, where: each pair $z_{2i}$, $z_{2i+1}$ is on the
same time line of $\La$, and $b_{2i-1,2i}$ is a unit line-segment with endpoints $z_{2i-1}$
and $z_{2i}$, perpendicular to the time-lines. 
The path $\pi$ is said to join $z_0$ and $z_{2k+1}$.
The {\em length} of $\pi$ is its
one-dimensional Lebesgue measure. A \emph{circuit} $D$ of $\La$ is a path 
except inasmuch as
$z_0=z_{2k+1}$. A set $D$ is called \emph{linear}
if it is a disjoint union of paths and/or 
circuits. Let $\De$, $\Ga$ be disjoint subsets
of $\La$. The linear set $D$ is said to \emph{separate} $\De$ and $\Ga$
if every path of $\La$ from $\De$ to $\Ga$ passes through $D$, and $D$ is minimal with
this property in that no strict subset of $D$ has the property.

Let $\om\in\OmL$. An {\em open path} $\pi$ of $\om$ is a path of $\La$
such that, in the notation above,
the intervals $[z_{2i},z_{2i+1}]$ contain no death of $\om$, and the line-segments
$b_{2i-1,2i}$ are bridges of $\om$.

Let $\Ga$ be a measurable subset and $\De$ a finite subset of $\La$ such that
$\De\cap \Ga=\es$.
We shall make use of the `ratio weak-mixing property'
of the spin-configurations in $\De$ and $\Ga$ that is stated and proved in \cite[Thm 7.1]{GOS};
note Remark \ref{rem:3}.

Consider the box $\Lamb$ with slit $S_L$. Let $K$ be an integer satisfying
$1\le K < \frac12 L$, and let 
\be\label{new54}
\begin{aligned}
\De&= \{x^+: x\in S_L,\, K\le x\le L-K\},\\
\Ga &= \{x^-: x\in S_L,\, K\le x\le L-K\}.
\end{aligned} 
\end{equation}
The following replaces \cite[Lemma 7.24]{GOS}.

\begin{lem}\label{thm2}
Let $\lam,\de\in(0,\oo)$ satisfy $\th=\lam/\de<2$, and let  
$\g>0$ be as in Theorem \ref{contperc}. There exists  
$C_1=C_1(\lam,\de)\in(0,\oo)$ such that the following holds.
For $\eps_K^+\in\Si_\De$, $\eps_K^-\in\Si_\Ga$, we have that
$$
\left| \frac{\ophmb(\s_\De=\eps_K^+,\, \s_\Ga=\eps_K^-)}
{\ophmb(\s_\De=\eps_K^+)\ophmb(\s_\Ga=\eps^-_K)} - 1\right|
\le  C_1 e^{-\frac12\g K},
$$
whenever the right side is less than $\frac12$. 
\end{lem}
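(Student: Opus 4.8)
The plan is to derive the estimate from the ratio weak-mixing property of the continuum \rc\ measure. I would apply \cite[Thm 7.1]{GOS}, in the free-boundary-condition form recorded in Remark~\ref{rem:3}, to the disjoint finite sets $\De$, $\Ga$ of \eqref{new54} under the measure $\ophmb$. That property controls the deviation from $1$ of the ratio
\[
\frac{\ophmb(\s_\De=\eps_K^+,\,\s_\Ga=\eps_K^-)}{\ophmb(\s_\De=\eps_K^+)\,\ophmb(\s_\Ga=\eps_K^-)}
\]
by a connectivity functional of the pair $(\De,\Ga)$ under $\ophmb$; exactly as in the proof of \cite[Lemma 7.24]{GOS}, this functional is at most a fixed multiple of $\sum_{x^+\in\De}\sum_{y^-\in\Ga}\ophmb(x^+\lra y^-)$, a sum of two-point connection probabilities. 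It therefore suffices to bound this double sum by $C_2(\lam,\de)\,e^{-\frac12\g K}$, \emph{uniformly} in $L$, $m$, and $\b$.

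The geometric heart of the matter is the slit. Since the time-lines at $x=0$ and at $x=L$ are themselves cut at height $0$, any open path of $\ophmb$ joining a vertex $x^+\in\De$ to a vertex $y^-\in\Ga$ must lie on an uncut time-line when it crosses the horizontal axis, and hence must wind around one of the two ends $(0,0)$, $(L,0)$ of the slit $S_L$. For $x\in[K,L-K]$ these ends lie at horizontal distances $x$ and $L-x$ from $x^+$, so such a path reaches distance at least $\min\{x,L-x\}\ (\ge K)$ from $x^+$, and, symmetrically, at least $\min\{y,L-y\}\ (\ge K)$ from $y^-$. Writing $\La_r(z)$ for the box of radius $r$ centred at $z$, it follows that
\[
\ophmb(x^+\lra y^-)\ \le\ \min\Bigl\{\ophmb\bigl(x^+\lra\pd\La_{\min\{x,L-x\}}(x^+)\bigr),\ \ophmb\bigl(y^-\lra\pd\La_{\min\{y,L-y\}}(y^-)\bigr)\Bigr\}.
\]
Since $q=2$, the general-box form of the exponential-decay estimate of Theorem~\ref{contperc} applies (via the stochastic-domination remark following that theorem), so each probability on the right is at most $Ce^{-\g\min\{x,L-x\}}$, respectively $Ce^{-\g\min\{y,L-y\}}$. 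Using $\min\{a,b\}\le\sqrt{ab}$,
\[
\ophmb(x^+\lra y^-)\ \le\ C\exp\!\Bigl(-\tfrac12\g\bigl(\min\{x,L-x\}+\min\{y,L-y\}\bigr)\Bigr).
\]

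It remains only to sum. The exponent is additive in $x$ and $y$, so, using symmetry of $\min\{x,L-x\}$ about $L/2$,
\[
\sum_{x^+\in\De}\sum_{y^-\in\Ga}\ophmb(x^+\lra y^-)
\ \le\ C\Bigl(\sum_{x=K}^{L-K}e^{-\frac12\g\min\{x,L-x\}}\Bigr)^{\!2}
\ \le\ C\Bigl(2\sum_{x\ge K}e^{-\frac12\g x}\Bigr)^{\!2}
\ \le\ C_2\,e^{-\g K},
\]
the final sum converging because $\g>0$ when $\th<2$. In particular the double sum is at most $C_2 e^{-\frac12\g K}$, with no $L$-dependent prefactor, and feeding this into the weak-mixing bound of the first paragraph gives the lemma with $C_1$ a suitable multiple of $C_2$. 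The restriction of the conclusion to the range in which the right-hand side is below $\tfrac12$ is exactly the range in which the (linearised) ratio weak-mixing inequality is applicable.

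I expect the only substantial obstacle to lie in the first paragraph: one must verify that \cite[Thm 7.1]{GOS} does carry over to free boundary conditions (Remark~\ref{rem:3} states that the proof is unchanged) and, more delicately, that the connectivity functional it yields genuinely reduces to the \emph{summable} double sum above rather than to a cruder quantity such as $|\De|\,e^{-c\,\mathrm{dist}(\De,\Ga)}$, which would reintroduce an $L$-dependence. It is this uniformity in $L$ that distinguishes the present statement from \cite[Lemma 7.24]{GOS}, whereas the exponential decay on which everything rests is supplied ready-made, and now throughout the subcritical range $\th<2$, by Theorem~\ref{contperc}.
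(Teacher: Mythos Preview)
Your overall plan---ratio weak-mixing plus exponential decay from Theorem~\ref{contperc}---is the same as the paper's, and your geometric observation that any path from $x^+\in\De$ to $y^-\in\Ga$ must round an end of the slit is exactly the heart of the matter. The divergence is in how \cite[Thm 7.1]{GOS} is actually invoked.

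You cast the weak-mixing bound as a double sum $\sum_{x^+\in\De}\sum_{y^-\in\Ga}\ophmb(x^+\lra y^-)$, and then (rightly) flag this as the one step you are unsure of. In fact that theorem, both in \cite{GOS} and as used here, is formulated in terms of a \emph{separating set} $D$: one chooses a linear $D$ separating $\De$ from $\Ga$ and bounds the ratio via $t_1=\ophmb(\De\lra D)$ and $t_2=\sqrt{\ophmb(D\lra\Ga)}$ (cf.\ \eqref{eq:121}). The paper simply takes $D$ to be the two horizontal segments completing the equator outside the slit,
\[
D=\bigl([-m,0)\times\{0\}\bigr)\ \cup\ \bigl((L,L+m]\times\{0\}\bigr),
\]
so that any connection from $\De$ to $D$ forces a point $(i,0)^+$ with $K\le i\le L-K$ to reach horizontal distance $\min\{i,L-i\}\ge K$. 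Then
\[
t_1\ \le\ 2\sum_{i=K}^{\lfloor L/2\rfloor} C_2\,e^{-\g i}\ \le\ C_3\,e^{-\g K},
\]
and by reflection symmetry $t_2^2=t_1$, giving the lemma at once. This is a \emph{single} sum, not a double one, and it is directly licensed by the theorem as stated; there is nothing to verify about whether the functional ``genuinely reduces'' to a summable form. Your route, if the double-sum form of the mixing bound could be justified, would yield the same exponent, but the separator formulation is both shorter and removes precisely the uncertainty you identified.
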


\begin{proof}
Take 
$$
D=\Bigl([-m,0)\times\{0\}\Bigr) \cup\Bigl((L,L+m])\times\{0\}\Bigr),
$$ 
the union of the two horizontal
line-segments that, when taken with the slit $S_L$, 
complete the `equator' of $\Lamb$.
Thus $D$ is a linear subset of $\Lamb$
that separates $\De$ and $\Ga$.
Let $t_1$, $t_2$, $t$ be as in \cite[Thm 7.1]{GOS}, namely,
\begin{equation}\label{eq:121}
\begin{gathered}
t_1=\ophmb(\De\lra D), \qq
t_2=\sqrt{\ophmb(D \lra \Ga)},\\
t= t_1+2t_2 +\frac{t_1+t_2}{1-t_1-2t_2}.
\end{gathered}
\end{equation}

By Theorem \ref{contperc}, there exist constants $C_2$, $C_3$, depending
on $\lam$ and $\de$ only, such that
\begin{align*}
t_1
\le 2\sum_{i=K}^{\lfloor L/2\rfloor} C_2 e^{-\g i}
\le C_3 e^{-\g K},
\end{align*}
and furthermore $t_2^2=t_1$.
The claim now follows by \cite[Thm 7.1]{GOS} and Remark \ref{rem:3}.
\end{proof}

We now prove Lemmas \ref{lem1} and \ref{lem2}. 

\begin{proof} [Proof of Lemma \ref{lem1}]
Let $\theta<2$ and let $\g$ be  as in Theorem \ref{contperc}. With $1 \le K < \frac12 L$,
write $\s_{L,K}^\pm=(\s_x^\pm: K \le x \le L-K)$.
First, let $x=(L,0)$, and let $\eps^+, \eps^-\in \{-1,+1\}^{L+1}$ be possible
spin-vectors of the sets $S_L^+$ and  $S_L^-$, respectively. By
\cite[Lemma 7.25]{GOS} with $S = S_L^+ \cup S_L^- \sm\{x^+\}$,
\begin{multline*}
\ophmb(\s_L^+ = \eps^+,\, \s_L^- =\eps^-)\\
\ge \tfrac12 \ophmb(\s_y^+=\eps_y^+\ \text{for}\ y\in S_L^+\sm\{x^+\},\, \s_L^-=\eps^-) 
\PLmblb(x^+ \nlra S).
\end{multline*} 
Now, $\PLmblb(x\nlra S)$ is at least as large as the probability that the
first event (death or bridge) encountered on moving northwards
from $x$ is a death, so that
$$
\PLmblb(x \nlra S) \ge  \frac{\de}{2\lam + \de}.
$$
On iterating the above, we obtain that
\be
\ophmb(\s_L^+=\eps^+,\, \s_L^-=\eps^-) \ge
A^{2K}
\ophmb(\s_{L,K}^+ = \eps_K^+,\, \s_{L,K}^- =\eps_K^-),
\ee
where $\eps^\pm_K$ is the vector obtained from $\eps^\pm$
by removing the entries labelled by vertices $x$ satisfying
$0\le x <K$ and $L-K < x \le L$, and
\be\label{new33}
A=\left(\frac \de{2(2\lam+\de)}\right)^2.
\ee
In summary,  for $\eps^\pm\in\Si_L$,
\begin{align}
A^{2K} \ophmb(\s_{L,K}^+=\eps_K^+,\, \s_{L,K}^-=\eps_K^-)
&\le \ophmb(\s_L^+=\eps^+,\ \s_L^-=\eps^-)\nonumber\\
&\le  \ophmb(\s_{L,K}^+=\eps_K^+,\, \s_{L,K}^-=\eps_K^-).
\label{new20}
\end{align}

With $\De$, $\Ga$ as in \eqref{new54},
we apply Lemma \ref{thm2}
to obtain that
there exists $C_1=C_1(\lam,\de)<\oo$ such that
\begin{equation}\label{new61}
\left|\frac{\ophmb(\s_{L,K}^+=\eps_K^+,\, \s_{L,K}^-=\eps_K^-)}
{\ophmb(\s_{L,K}^+=\eps_K^+)\ophmb(\s_{L,K}^-=\eps_K^-)}-1\right|
\le C_1 e^{-\frac12\g K},
\end{equation}
whenever the right side is less than or equal to $\frac12$.

By a similar argument to \eqref{new20},
\begin{equation}\label{new62}
A^K \ophmb(\s_{L,K}^\pm=\eps_K^\pm)
\le \ophmb(\s_L^\pm=\eps^\pm)
\le  \ophmb(\s_{L,K}^\pm=\eps_K^\pm).
\end{equation}
The claim follows on combining \eqref{new20}--\eqref{new62}.
\end{proof}

\begin{proof}[Proof of Lemma \ref{lem2}]
Let $\De=S_L^+ \cup S_L^-$ and $\Ga=\pdh\Lamb$, and suppose  $\theta<2$. Let $k=\frac37 m$ and
assume for simplicity that $k$ is an integer. (If either $m$ is small or $k$ is
non-integral, the constant $C$ may be adjusted accordingly.)
Let $D_0$ be the circuit illustrated in Figure \ref{fig:4}, comprising a path in
the upper half-plane from $(-k,0)$ to $(L+k,0)$ together with its reflection
in the $x$-axis. Let $D=D_0 \cap \Lamb$. Thus, $D=D_0$ in the case $\beta=\beta_2$ of the figure.
In the case $\b=\b_1$, $D$ comprises two disjoint paths of $\Lamb$. 
In each case, $D$ separates $\Delta$ and $\Sigma$.

\begin{figure}[t] 
\psfrag{x}[][]{$x_2$}
\psfrag{y}[][]{$x_3$}
\begin{center}
\includegraphics[width=0.6\textwidth]{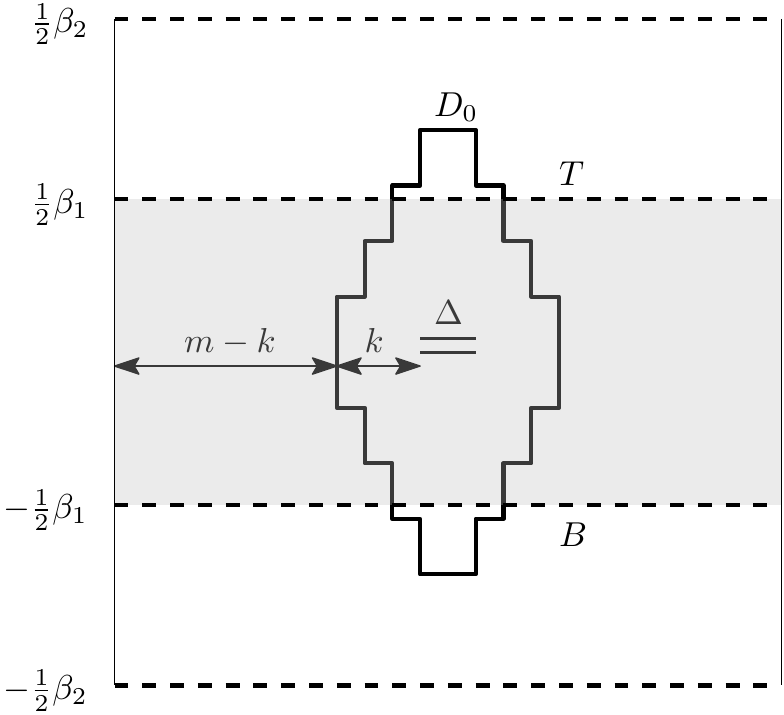}
\caption{The circuit $D_0$ is approximately a parallelogram with $\De$
at its centre. The sides comprise vertical steps of 
height $2$ followed by horizontal
steps of length 1. The horizontal and vertical diagonals of $D_0$ have lengths $2k+L$ and
(approximately) $4k+2L$ respectively, where $k=\frac37 m$. 
Two values of $\beta$ are indicated. When $\beta=\beta_2$, $D_0$ is contained in $\Lamb$
and we take $D=D_0$.
When $\beta=\beta_1$, $\Lamb$ is the shaded area only, and we work with 
$D=D_0 \cap \Lamb$ considered as the union of two disjoint paths that separates $\De$ and $\Sigma$.}
\label{fig:4}
\end{center}
\end{figure}

Let $t_1$, $t_2$, $t$ be as in \eqref{eq:121}.
By the ratio weak-mixing theorem \cite[Thm 7.1]{GOS} and Remark \ref{rem:3},
$$
\left| \frac{\ophmb^{\eta}((\s^+_L,\s_L^-)=(\eps^+,\eps^-))}
{\ophmb((\s^+_L,\s_L^-)=(\eps^+,\eps^-))}-1\right|
\le 2t,\qq \eps^\pm\in\Si_L,
$$
whenever $t\le \frac12$.
We multiply up, and sum over $(\eps^+,\eps^-)\in A$
to obtain 
\be\label{eq:203}
\left| \frac{\ophmb^{\eta}(\s_\De\in A)}
{\ophmb(\s_\De\in A)}-1\right|
\le 2t,
\ee
whenever $t \le \frac12$.

By Theorem \ref{contperc}, there exist $C_2,C_3,c_4>0$, depending on $\lam$, $\de$, such that
\begin{align}\nonumber
t_1 &\le 4\sum_{i=0}^{\lfloor L/2\rfloor} \ophi((i,0)\lra D_0)\\
&\le 4\sum_{i=0}^{\lfloor L/2\rfloor}  C_2 e^{-\g\frac23( k +i)} \le C_3 e^{-\frac27\g m},
\label{eq:2001}\end{align}
and similarly,
\be\label{eq:2002}
t_2^2 \le 8\sum_{i=0}^{\lceil k+L/2\rceil} C_2 e^{-\g(\frac47m+ c_4 i)} \le C_3 e^{-\frac47\g m}.
\ee
The claim follows.
\end{proof}

\section{Quenched disorder}\label{sec:disorder}

The parameters $\lam$ and $\de$ have so far been assumed constant. The situation
is more complicated in the disordered case, when either they vary deterministically,
or they are random. The
arguments of this paper may be applied in both cases, and the outcomes 
are summarised in this section.
Let the Hamiltonian \eqref{ham} be replaced by \eqref{ham2},
and write $\blam=(\lam_{x,x+1}: x \in \ZZ)$ and $\bde=(\de_x: x \in \ZZ)$.

The fundamental bound of Theorem \ref{mainest} depends only on
the ratio $\th=\lam/\de$. In the disordered setting, the connection probabilities
of the continuum \rc\ model are increasing in $\blam$
and decreasing in $\bde$, and powers of the function $A(\lam,\de)$ of \eqref{new33} are 
replaced by products of the form
\be\label{new33a}
A'_{x,k}=\prod_{i=1}^k \left(\frac {\de_{x+i}}{2(\de_{x+i}+\lam_{x+i,x+i-1}+\lam_{x+i,x+i+1})}\right),
\ee
which are decreasing in $\blam$
and increasing in $\bde$.
By examination of the earlier lemmas and proofs, 
the conclusions of the paper are found to be valid with $\g=\g(\lam,\de)$ whenever
\eqref{eq:1005} holds with some $\lam,\de>0$.
Hence, in the disordered case where \eqref{eq:1005}
holds with probability one, the corresponding conclusions are valid a.s.\ (subject to appropriate
bounds on the ratio $\lam/\de$). This proves Theorem \ref{thm5}.

Consider now the situation in which \eqref{eq:1005} does not
hold with probability one. Suppose that the $\lam_{x,x+1}$, $x\in\ZZ$,
are independent, identically distributed random variables, and
similarly the $\de_x$, $x\in\ZZ$, and assume that the vectors $\blam$ and $\bde$ 
are independent. 
We write $P$ for the corresponding probability
measure, viewed as the measure governing the `random environment'.

A quenched area law might assert something along the following lines:
subject to  suitable conditions, there exists 
a random variable $Z$ which is $P$-a.s.\ finite such that
$S(\rho_m^L)<Z$ for all appropriate $m$, $L$.
Such a uniform upper bound will not generally exist, owing to the fluctuations in the system as $L \to\oo$.
In the absence of an assumption of the type of \eqref{eq:1005}, there may exist sub-domains of 
$\ZZ$ where the environment is not propitious for such a
bound. 

Partial progress may be made using the methods of \cite[Sect.\ 8]{GOS}, but this
is too incomplete for inclusion here.

\section{Proof of Theorem \ref{entest}}\label{sec:proof2}

Since this proof is very close to that of \cite[Thm 2.12]{GOS}, we include only details 
that are directly relevant to 
the strengthened claims of the current theorem, namely the removal of the logarithmic term
of \cite{GOS} and the weakened assumption on $\g$.
 
Let $C$ and $\g>2\ln 2$ be
as in Theorem \ref{mainest2}, and  choose an integer $K =K(\theta)\ge 2$ such that 
\begin{equation}\label{eq:new67}
 C e^{-\g K} \le 1.
\end{equation}
As in \cite{GOS}, 
\be\label{eq:1002}
S(\rho_m^L)  \le 2K, \qq 2 \le m \le K,
\ee
and we assume henceforth that $m>K$.

Let $\eps(r)= C e^{-\gamma (K+r)}$, so that, by \eqref{eq:new67},
\be\label{eq:eps}
\eps(r) \le e^{-\g r},\qquad r \ge 0.
\ee
On following the proof of \cite[Thm 2.8]{GOS} up to equation (2.22) there, we find that
\be\label{eq:eigbounds}
\lambda_j^{\downarrow}(\rho_{m}^L) \le \frac{c}{ j^\xi},
\qquad 2^{2K} < j,
\ee
where $\xi = {\gamma}/(2\ln 2) > 1$ and $c=e^{\gamma(K+1)}/(1-e^{-\g})$.

Now,
\be\label{eq:1003}
S(\rho_m^L) =  S_1 + S_2,
\ee
where 
$$
S_1 = -\sum_{j=1}^{\nu} \lambda_j^{\downarrow}(\rho_m^L)
\log_2 \lambda_j^{\downarrow}(\rho_m^L), \quad
S_2 =
-\sum_{j=\nu+1}^{2^{L+1}} \lambda_j^{\downarrow}(\rho_m^L)
\log_2 \lambda_j^{\downarrow}(\rho_m^L),
$$ 
and $\nu = 2^{2(K+2)}$.
Since the $\lambda_j^{\downarrow}(\rho_m^L)$, $1\le j\le \nu$, are non-negative 
with sum $Q$ satisfying $Q\le 1$, we have
\begin{equation}\label{eq:103}
S_1 \le \log_2 \nu =2(K+2).
\end{equation}

We use  \eqref{eq:eigbounds}
to bound $S_2$ as in \cite{GOS}, to obtain
\begin{align*}
S_2 &\le 
 -\sum_{j=\nu+1}^{\infty} \frac{c}{ j^{\xi}}\log_2 \left(\frac{c}{j^{\xi}}\right) 
\le c_1, 
\end{align*}
for some $c_1=c_1(\theta)<\oo$.
By \eqref{eq:1003}--\eqref{eq:103},
\be\label{eq:1004}
S(\rho_m^L) \le 2(K+2) + c_1, \qq m \ge K,
\ee
which completes the proof.

\providecommand{\bysame}{\leavevmode\hbox to3em{\hrulefill}\thinspace}
\providecommand{\MR}{\relax\ifhmode\unskip\space\fi MR }
\providecommand{\MRhref}[2]{%
  \href{http://www.ams.org/mathscinet-getitem?mr=#1}{#2}
}
\providecommand{\href}[2]{#2}

\end{document}